\def\R{\mathbb{R}}
\def\C{\mathcal{C}}
\def\IS{\mathbb{IS}}
\def\IG{\mathbb{IG}}
\def\sgn{\mathop{\hbox{sgn}}}
\def\S{\mathcal S}
\def\F{\mathcal F}
\def\T{\mathcal{T}}
\providecommand{\abs}[1]{\lvert#1\rvert}
\tikzstyle{mybox2} = [fill=yellow!0,
\newtheorem{theorem}{Theorem}[section]
\newtheorem{lemma}[theorem]{Lemma}
\newtheorem{proposition}[theorem]{Proposition}
\newtheorem{remark}[theorem]{Remark}
\newtheorem{corollary}[theorem]{Corollary}
\newtheorem{definition}[theorem]{Definition}
\begin{document}
\bibliographystyle{abbrvnat}

\begin{center}
{\Large\sc Coexistence and Extinction in Flow-Kick Systems: \\An invasion growth rate approach}
\end{center}
\vskip 0.25in
\begin{center}
{\sc Sebastian J. Schreiber\\
Department of Evolution and Ecology\\
University of California, Davis USA 95616}
\end{center}
\vskip 0.25in 
 \noindent \noindent \textsc{Abstract.} Natural populations experience a complex interplay of continuous and discrete processes: continuous growth and interactions are punctuated by discrete reproduction events, dispersal, and external disturbances. These dynamics can be modeled by impulsive or flow-kick systems, where continuous flows alternate with instantaneous discrete changes. To study species persistence in these systems, an invasion growth rate theory is developed for flow-kick models with state-dependent timing of kicks and auxiliary variables that can represent stage structure, trait evolution, or environmental forcing. The invasion growth rates correspond to Lyapunov exponents that characterize the average per-capita growth of species when rare.  Two theorems are proven that use invasion growth rates to characterize permanence, a form of robust coexistence where populations remain bounded away from extinction. The first theorem uses Morse decompositions of the extinction set and requires that there exists a species with a positive invasion growth rate for every invariant measure supported on a component of the Morse decomposition. The second theorem uses invasion growth rates to define invasion graphs whose vertices correspond to communities and directed edges to potential invasions. Provided the invasion graph is acyclic, permanence and extinction are fully characterized by the signs of the invasion growth rates. Invasion growth rates are also used to identify the existence of extinction-bound trajectories and attractors that lie on the extinction set.  To demonstrate the framework's utility, these results are applied to three ecological systems: (i) a microbial serial transfer model where state-dependent timing enables coexistence through a storage effect, (ii) a spatially structured consumer-resource model showing intermediate reproductive delays can maximize persistence, and (iii) an empirically parameterized Lotka-Volterra model demonstrating how disturbance can lead to extinction by disrupting facilitation.  Mathematical challenges, particularly for systems with cyclic invasion graphs, and promising biological applications are discussed. These results reveal how the interplay between continuous and discrete dynamics creates ecological outcomes not found in purely continuous or discrete systems, providing a foundation for predicting population persistence and species coexistence in natural communities subject to gradual and sudden changes.\vskip 0.1in
\noindent\textbf{Keywords:} impulsive differential equations, flow-kick systems, uniform persistence, permanence, Lyapunov exponents, growth rates when rare, coexistence, extinction.
 
 
\section{Introduction}
Many biological systems naturally combine continuous and discrete processes. For example, animal populations may grow and interact continuously throughout a season but reproduce only at specific times~\citep{gyllenberg-etal-1997-continuous,pachepsky-etal-2008-reproduction,geng-lutscher-2021-seasonal,geng-etal-2021-coexistence}, diseases may spread continuously while vaccination programs occur in discrete pulses~\citep{agur-etal-1993-vaccination,meng-chen-2008-sir-pulse,donofrio-2002-seir,bai-2015-seirs}, lakes may experience continuous nutrient cycling interrupted by sudden influxes from storm events~\citep{meyer-etal-2018-resilience}, and microbial communities in serial-transfer experiments alternate between continuous growth and dilution events~\citep{yi-dean-2013-bounded,wolfe-dutton-2015-fermented,venkataram-etal-2016-fitness,yurtsev-etal-2016-oscillatory,good-etal-2017-molecular,letten-ludington-2023-pulsed}. Traditional purely continuous or discrete models fail to capture these hybrid dynamics, potentially missing important emergent properties and complex behaviors that arise from the interaction between continuous and discrete processes.

To address this reality, impulsive differential equations (also known as flow-kick models) combine continuous-time and discrete-time processes~\citep{lakshmikantham-etal-1989-impulsive,meyer-etal-2018-resilience}. In these models, continuous processes are modeled by the flow of a system of ordinary differential equations (ODEs). Instantaneous changes in the system (the ``kicks'') occur at specific times and are modeled by an update map. The complete model alternates between the flow and the kicks: the system flows according to its ODEs for a set period of time $\tau$, experiences instantaneous kicks, then flows again, establishing a recurring pattern of disturbance and recovery. The time $\tau$ between kicks can be fixed or state-dependent, depending on the biological context. In models of seasonal reproduction, for example, kicks occur at fixed annual intervals. In contrast, the timing of kicks may depend on system states, as seen in disease models where vaccination pulses are triggered by disease prevalence~\citep{agur-etal-1993-vaccination,donofrio-2002-seir,meng-chen-2008-sir-pulse,bai-2015-seirs}, or in serial transfer experiments where dilution events are determined by the optical density of the medium~\citep{yi-dean-2013-bounded}. The kicks can correspond to changes in population state (e.g. a dilution event in a serial transfer experiment or a reproductive pulse) or changes in flow (e.g. seasonal changes in demographic rates). 

Flow-kick systems present unique mathematical challenges for examining questions of coexistence and extinction. A fruitful approach in theoretical population biology to study these questions uses invasion growth rates—the average per-capita growth rate of a species or genotype when rare~\citep{macarthur-levins-1967-limiting,chesson-1982-environment,chesson-1994-competition,ellner-etal-2016-storage,barabas-etal-2018-chesson,ellner-etal-2019-coexistence,spaak-schreiber-2023-assembly,clark-etal-2024-coexistence}. Mathematically, these invasion growth rates correspond to Lyapunov exponents (a measure of how perturbations grow or shrink over time) and can be used to characterize a strong form of coexistence called \emph{permanence}~\citep{schreiber-2000-permanence,schreiber-etal-2011-persistence,hofbauer-schreiber-2022-permanence}. Permanence occurs when a system maintains a global attractor bounded away from extinction~\citep{schuster-etal-1979-hypercycles,sigmund-schuster-1984-permanence,butler-etal-1986-persistence}, ensuring that coexistence remains robust to both small, repeated perturbations and rare, large disturbances~\citep{jansen-sigmund-1998-permanence,schreiber-2006-persistence}. Invasion growth rates also provide an approach to identify the existence of extinction-bound trajectories -- trajectories where all species are initially present but where a subset of species ultimately tends toward extinction. Although the invasion growth rate approaches to permanence and extinction are well developed for purely continuous or discrete-time models~\citep{rand-etal-1994-evolution,ashwin-etal-1994-attractor,schreiber-2000-permanence,garay-hofbauer-2003-permanence,roth-etal-2017-permanence,patel-schreiber-2018-permanence}, their extension to flow-kick systems remains an open challenge, as it requires careful modifications to handle the interplay between flows and kicks.

This paper addresses these challenges by introducing invasion growth rates for a general class of flow-kick systems that include most forms of autonomous impulsive equations. Using these invasion growth rates, I provide two complementary approaches for establishing permanence: one based on Morse decompositions of the extinction set (collections of invariant sets that capture the fundamental dynamics) and another using invasion graphs that characterize potential transitions between communities. Furthermore, these invasion growth rates are coupled with earlier work on identifying attractors in invariant hyperplanes~\citep{ashwin-etal-1994-attractor,rand-etal-1994-evolution} to identify extinction attractors. To illustrate how to use these results and provide new biological insights into how flow-kick dynamics influences persistence, I analyze models for Lotka-Volterra systems experiencing periodic disturbances, models of serial passage experiments, and spatially structured consumer-resource dynamics with pulsed consumer reproduction. These applications demonstrate specific mechanisms through which the interplay between continuous flows and discrete kicks creates novel persistence outcomes that cannot emerge in purely continuous or purely discrete systems, including kick-induced coexistence in serial transfer experiments and extinction when reproductive events occur too frequently.

The framework developed in this paper provides a foundation for understanding how the complex interplay between continuous and discrete dynamics affects species persistence and community composition in natural systems. By extending invasion growth rate theory to flow-kick models, we gain new insights into predicting when species will coexist or go extinct in environments characterized by both gradual changes and sudden disturbances.

\section{Model Framework}\label{sec:models}

The models developed in this paper track the dynamics of $k$ species (or populations) through their densities $x=(x_1,x_2,\dots,x_k)$ and $\ell$ auxiliary variables $y=(y_1,y_2,\dots,y_\ell)$. This flexible structure allows us to represent various ecological complexities including species interactions, population structure (e.g. spatial, age, or genotypic distributions) and auxiliary processes (e.g. plant-soil feedbacks, seasonal forcing). The species densities take values in the nonnegative cone $[0,\infty)^k$ of the $k$-dimensional Euclidean space $\R^k$, while the auxiliary variables take values in the $\ell$-dimensional Euclidean space $\R^\ell$. Let $z=(x,y)$ correspond to the complete state of the system.

Three components govern the system's impulsive dynamics: a vector field that determines the continuous-time dynamics (``the flow''), a mapping $z\to H(z)$ that governs discrete-time impulses (``the kick''), and a timing function $\tau$ that schedules when discrete events occur. The continuous dynamics are determined by the per capita growth rates $f_i(z)$ of each population $i=1,2,\dots,k$ and the rate of change $g_i(z)$ of the auxiliary variables for $i=1,2,\dots,\ell$. Thus, the equations of motion for the flow are:
\begin{equation}\label{eq:flow}
\begin{aligned}
\frac{dx_i}{dt}=&x_if_i(z)\quad \mbox{ for } i \in \{1,2,\dots,k\}\\
\frac{dy}{dt}=&g(z)\quad \mbox{where } z=(x,y).
\end{aligned}
\end{equation}

To ensure that the flow defined by \eqref{eq:flow} is uniquely determined, I make the following assumption:
\begin{description}
\item[A1] There is a set $\mathcal{K}\subset [0,\infty)^k \times \R^\ell $ open in $[0,\infty)^k \times \R^\ell$ in which the functions $f_i,g_j:\mathcal{K}\to \R$ are locally Lipschitz for $i=1,\dots,k$, $j=1,\dots,\ell$. Consequently, there exists a unique solution $z.t$ to \eqref{eq:flow} for any initial condition $z=(x,y) \in \mathcal{K}$. 
\end{description}

{For each initial condition $z\in K$, the existence and uniqueness theorem implies there exists a maximal open interval $I(z)\subset \R$ such that $0\in I(z)$ and $z.t$ is defined for all $t\in I(z)$~\citep{perko2001differential}.} The mapping $(z,t)\mapsto z.t$ defines the flow of \eqref{eq:flow}, that is, how the initial condition $z$ moves over time following the vector field \eqref{eq:flow}.   The two defining properties of a flow are: (i) $z.0=z$, which means there is no movement if no time has elapsed, and (ii) $(z.t).s=z.(t+s)$ {for $t+s\in I(z)$}, meaning that the system ends up at the same place whether you flow for $t+s$ units of time starting at $z$ or flow for $s$ units of time starting at $z.t$. 

The timing of the kicks is allowed to be state dependent and is given by a continuous positive function $\tau$. To ensure that the model is well defined, I assume that solutions starting with $z$ are defined until at least time $\tau(z)$:

\begin{description}
\item[A2] There is a closed set $\S\subset \mathcal{K}$ and a continuous positive function $\tau: \S \to (0,\infty)$ such that {$\tau(z)\in I(z)$ and $z.t\in\mathcal{K}$} for $0\le t \le \tau(z)$. 
\end{description}

The set $\S$ corresponds to the possible states of the system immediately after a kick. After flowing for $\tau(z)$ units of time, the system enters the set
\begin{equation}\label{eq:states_before_kick}
    \T=\{z.\tau(z): z\in \S\},
\end{equation}
that contains all possible states just before a kick occurs. The system then experiences a kick determined by the mapping:
\begin{equation}\label{eq:kick}
H:\T\to\S \mbox{ where } H(z)=(x_1F_1(z),\dots,x_kF_k(z),G(z))
\end{equation}
where $F_i(z)$ is the multiplicative change in the density of species $i$ and $G(z)$ updates the auxiliary variables. Concerning the kick mapping $H$, the following assumptions are made:
\begin{description}
\item [A3] The functions $F_i:\T\to (0,\infty)$ for $i=1,\dots,k$ are continuous and positive. This ensures that kicks cannot cause immediate extinction of a species that is present.
\item [A4] The mapping $G:\T\to \R^\ell$ is continuous, ensuring continuous updates to the auxiliary variables.
\end{description} 

To construct the full impulsive dynamics, I first consider the state changes between successive kicks given by iterating:
\begin{equation}\label{eq:kick-to-kick-map}
\mbox{the kick-to-kick map }\kappa(z)=H(z.\tau(z)).
\end{equation}

Given an initial condition $z\in \S$, the \emph{forward orbit of $z$} is the sequence 
\[\mathcal{O}^+(z)=\{z,\kappa^1(z),\kappa^2(z),\kappa^3(z),\dots\}\] 
where $\kappa^n(z)=(\underbrace{\kappa\circ \kappa \circ \dots \circ \kappa}_{n\mbox{ times}})(z)$ denotes $n$-fold composition of $\kappa$ with itself. 

To extend the kick-to-kick dynamics to continuous time, I define the mapping $\Phi:[0,\infty)\times \S \to \mathcal{K}$ as follows:
\begin{equation}\label{eq:flow-kick}
\begin{aligned}
\mbox{the flow-kick mapping }\Phi(t,z)&=\left\{ 
\begin{array}{ll}
z & \mbox{for }t=0\\
\kappa^{{n-1}}(z).(t-T(n-1,z)) & \mbox{ for }T(n-1,z)\le t < T(n,z)\\
\kappa^n(z) & \mbox{ for }t=T(n,z)
\end{array}\right.\\
\mbox{where }T(0,z)&=0 \mbox{ and } T(n,z)=T(n-1,z)+\tau(\kappa^{n-1}(z)) \mbox{ for } n=1,2,3,\dots
\end{aligned}
\end{equation}
Here, $T(n,z)$ represents the cumulative time until the $n$-th kick occurs, which is constructed by summing the individual kick intervals $\tau$ for each state in the sequence. This formulation allows us to determine the state of the system at any continuous time $t$, not just at discrete kick times.

To illustrate the diversity of dynamics these models can capture, I present three ecological examples. The first example demonstrates the simplest type of flow-kick system where kicks reduce all populations by a fixed proportion. The second example simultaneously illustrates the use of state-dependent kick times $\tau$ and auxiliary variables $y$ to periodically switch flows. The final example shows how to use auxiliary variables to account for the population structure.

\begin{figure}
\begin{center}
\includegraphics[width=0.45\textwidth]{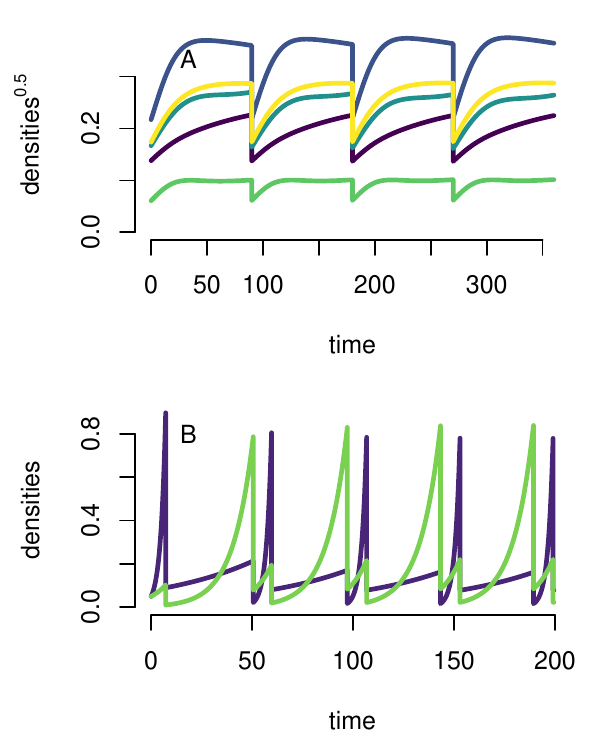}
\includegraphics[width=0.45\textwidth]{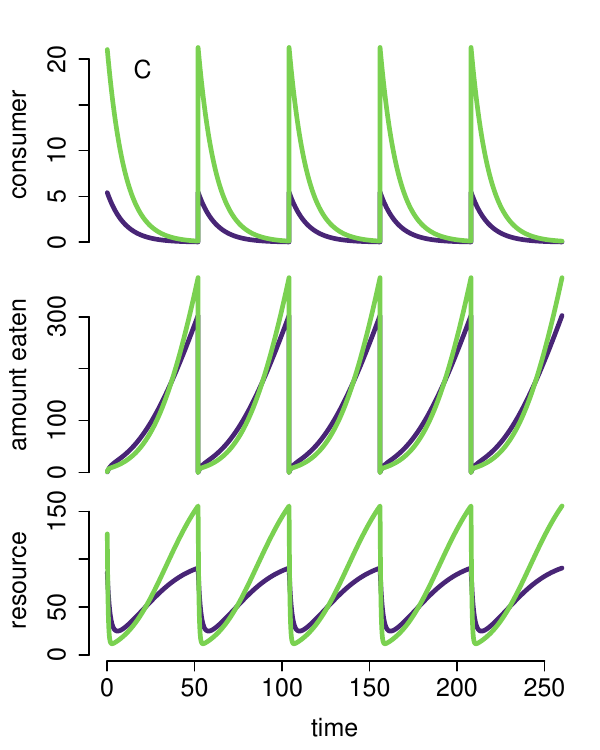}
\end{center}\caption{The flow-kick dynamics of: (A) Lotka-Volterra dynamics with periodic disturbances, (B) competing species experiencing serial passages, and (C) a two-patch model of resource-consumer interactions where the consumer reproduces at discrete time intervals. Parameters: 
\newline
For model (A): $A={\tiny -{\begin{pmatrix}
 0.41 &  0.12 &  0.07 & -0.03 &  0.00 \\
 0.30 &  0.41 &  0.36 &  0.15 &  0.11 \\
 0.07 &  0.36 &  0.31 & -3.60 &  0.34 \\
 0.15 &  0.82 &  0.72 &  0.80 & -0.78 \\
 0.18 &  0.05 &  0.13 &  0.57 &  0.56
\end{pmatrix}}}$, $b={\tiny{\begin{pmatrix}
0.04519 \\
0.10343 \\
0.06709 \\
0.11271 \\
0.07652
\end{pmatrix}}}$, $\tau={90}$, and {disturbance levels $d_i=1-e^{-1}$}. For model (B): $\rho(1)=(0.4,0.1), \rho(2)=(0.02,0.1),\delta=0.1$ and $z(0)=(0.05,0.05,1)$. For model (C): $\alpha=(10,20)$, $\beta=(0.1,0.1)$,$a=(0.1,0.1)$, $m=(0.1,0.1)$,$\theta=0.5$,$\tau=52$,and $d=0.05$}\label{fig:examples}
\end{figure}

\subsection{Lotka-Volterra systems with periodic disturbances}\label{sec:LV}
To analyze how disturbances affect species coexistence, \citet{huston-1979-diversity} studied Lotka-Volterra models with periodic density-independent population reductions. For a community of $k$ species, let $x_i$ denote the density of the $i$-th species, $b_i$ its intrinsic growth rate, and $a_{ij}$ the per-capita effect of species $j$ on the growth rate of species $i$. For this model, $\mathcal{K}=\S=\T=[0,\infty)^k$, meaning the state space for the continuous dynamics and the pre- and post-kick states are all the same. The population dynamics between disturbances follows the classic Lotka-Volterra equations:
\[
\frac{dx_i}{dt} = x_i \left(b_i +\sum_{j=1}^k a_{ij}x_j \right)
\]

Disturbances occur at regular intervals of length $\tau$, reducing the density of each species $i$ by a fixed proportion $d_i\in [0,1)$. These instantaneous reductions can be expressed as the kick map $H: [0,\infty)^k \to [0,\infty)^k $ given by:
\[
H(x_1,\dots,x_k) = ((1-d_1)x_1,\dots,(1-d_k)x_k)
\]

{Figure~\ref{fig:examples}A shows a simulation of this impulsive dynamical system for an empirically parameterized Lotka-Volterra model. The parameterization is based on five grassland species (\emph{Lolium perenne}, \emph{Phleum pratense}, \emph{Plantago.lanceolata}, \emph{Trifolium pratense}, \emph{Trifolium repens}) from field experiments conducted at Teagasc Environmental Research Centre, Co. Wexford, Ireland~\citep{geijzendorffer-etal-2011-sustained}. \citet{geijzendorffer-etal-2011-sustained} estimated the interaction coefficients through field experiments with species grown in monocultures and mixtures of two or three species.  The time scale for the model is in days, and simulations are performed over a growing period of approximately $90$ days. Kicks represent mortality during the non-growing season.}

\subsection{Competition models for serial passage experiments}\label{sec:serial}

\citet{yi-dean-2013-bounded} conducted competition experiments between two populations of \emph{E. coli}: one chloramphenicol resistant (Clm$^R$) and one resistant to tetracycline (Tet$^R$). Fluctuating selection was imposed by alternating sublethal doses of each antibiotic in a minimal glucose medium. When the cultures reached a certain optical density, they were diluted $100$ times in fresh medium with the alternate antibiotic.  To model the dynamics of these competitors, \citet{yi-dean-2013-bounded} used a flow-kick model in which populations grew exponentially until reaching a critical density and then were diluted and introduced to another environment. In this model, $x=(x_1,x_2)$ corresponds to the densities of the two competitors. An auxiliary variable $y\in\{0,1\}$ keeps track of the environments, with chloramphenicol present in environment $0$ and tetracycline present in  environment $1$. Let $\rho_i(y)$ be the per-capita growth rate of genotype $i$ in environment $y$. The flow dynamics represents the uncoupled exponential growth of the two populations:
\[
\begin{aligned}
\frac{dx_i}{dt}=& \rho_i(y)x_i \mbox{ for }i=1,2\\
\frac{dy}{dt}=&0
\end{aligned}
\]

Without loss of generality, we set the critical density for a serial transfer to $1$. Let $\delta<1$ be the factor by which the population is diluted during a serial transfer. Then the post-kick state space $\S$ and the pre-kick state space $\T$ are:
\[
\S=\{x\in [0,\infty)^2: \sum_i x_i = \delta \} \times \{0,1\}
\mbox{ and }
\T=\{x\in [0,\infty)^2: \sum_i x_i = 1\} \times \{0,1\}.
\]

The time to kick $\tau:\S\to (0,\infty)$ is implicitly defined as the unique solution to
\[
\sum_i x_i e^{\rho_i(y)\tau(z)}=1 \mbox{ for }z=(x,y)\in \S.
\]

By the implicit function theorem, $\tau$ is a continuous function. The kick function $H:\T\to\S$ is:
\[
H(z)=(\delta x_1,\delta x_2, (y+1) \mbox{ mod }2) \mbox{ where }z=(x_1,x_2,y)
\] 

This kick function dilutes both populations by the factor $\delta$ and switches the environment between states $0$ and $1$. A sample simulation of this model is shown in Figure~\ref{fig:examples}B.

\subsection{Two Patch Pulsed Consumer-Resource Models}\label{sec:2patch}

\citet{pachepsky-etal-2008-reproduction} introduced a flow-kick model of the dynamics of consumers and resources. In their model, the resource grows continuously and is continuously consumed throughout the year. Consumers accumulate consumed resources and die at a constant rate. Consumer reproduction occurs only once per year as a discrete pulse event proportional to the amount of resources consumed. These models capture the dynamics of consumers, such as barnacles, oysters, or clams, that feed continuously on plankton while breeding once per year.

Here, I consider a two-patch variant of this model to illustrate how auxiliary variables can capture spatial structure and biological feedback. Let $R_i$ denote the density of resources in patch $i=1,2$, $C_i$ the density of the consumer, and $B_i$ the amount of biomass accumulated per consumer. Unlike \citet{pachepsky-etal-2008-reproduction}'s model with logistic growth, resources grow at a constant rate $\alpha_i$ and have a per-capita loss rate $\beta_i$ in patch $i$. This chemostat-type dynamic can represent resources that aren't only locally reproducing but also arriving from regions outside the focal patches, such as plankton arrival near two oyster reef patches. The consumer is sedentary and consumes the resource with a per-capita attack rate $a_i$ in patch $i$. It experiences a per-capita mortality rate $m_i$ in patch $i$. Between reproductive events, the dynamics follow:
\begin{equation}
\begin{aligned}
\frac{dR_i}{dt}=& \alpha_i-\beta_iR_i-a_iR_iC_i\\
\frac{dC_i}{dt}=& -m_i C_i\\
\frac{dB_i}{dt}=& a_i R_i
\end{aligned}
\end{equation}

At fixed intervals of length $\tau>0$, consumers reproduce in proportion to their accumulated biomass, with offspring dispersing between patches with probability $d$. The instantaneous update of population densities follows:
\begin{equation}
\begin{aligned}
&(R_1,R_2,C_1,C_2,B_1,B_2)\mapsto(R_1,R_2,F_1,F_2,0,0)\\
\mbox{where }&F_1=(1-d)C_1(1+\theta  B_1 )+dC_2(1+\theta B_2 )\\
\mbox{and }&F_2=dC_1(1+\theta B_1 )+(1-d)C_2(1+\theta  B_2 )\\
\end{aligned}
\end{equation}
Here, $\theta$ represents the conversion efficiency of the consumed resource to consumers. 

To facilitate analysis of this spatially structured system and align it with our general framework, I express the model using a coordinate transformation that separates the total consumer density from its spatial distribution. This transformation allows one to directly track the size of the consumer population (as our primary state variable), while treating the spatial structure and resource dynamics as auxiliary variables. To express this model in the coordinate system $z=(x,y)$, I define $x=C_1+C_2$ as the total density of consumers, $y_1=R_1$, $y_2=R_2$, $y_3=B_1$, $y_4=B_2$, and $y_5=C_1/x$ as the fraction of consumers in patch $1$. In these coordinates, the flow dynamics become:
\begin{equation}
    \begin{aligned}
    \frac{dx}{dt}&= -m_1y_5x-m_2(1-y_5)x      &\phantom{=} &\frac{dy_5}{dt}= -m_1y_5 + y_5(m_1y_5+m_2(1-y_5))
\\
      \frac{dy_1}{dt}&= \alpha_1-\beta_1 y_1 - a_1 y_1 x y_5 &\phantom{=}
      &\frac{dy_2}{dt}= \alpha_2-\beta_2 y_2 - a_2 y_2 x (1-y_5)\\
      \frac{dy_3}{dt}&= a_1y_1 &\phantom{=}
      &\frac{dy_4}{dt}= a_2 y_2\\
   \end{aligned} 
\end{equation}
and the kick update map is:
\begin{equation}
H(z)=\left(\phi_1(z)+\phi_2(z),y_1,y_2,0,0,\phi_1(z)/(\phi_1(z)+\phi_2(z))\right).
\end{equation}
Here, $\phi_1(z)$ and $\phi_2(z)$ represent the transformed post-reproduction consumer densities $F_1$ and $F_2$ in the original coordinates:
\[
\begin{aligned}
&\phi_1(z)=(1-d)(1+\theta  y_3)xy_5 +d(1+\theta  y_4)x(1-y_5)\\
\mbox{and }&\phi_2(z)=d(1+\theta y_3)xy_5 +(1-d)(1+\theta  y_4)x(1-y_5).
\end{aligned}
\]
The state space is $\mathcal{K}=\S=\T=[0,\infty)\times [0,\infty)^4\times [0,1]$ with $k=1$ (representing the species of the single consumer) and $\ell=5$ (representing the five auxiliary variables). A sample simulation of this model is shown in Figure~\ref{fig:examples}C.

\section{Extinction set and Invasion growth rates}

For flow-kick models, the extinction set $\S_0:=\{z \in \S | \prod_{i=1}^n x_i=0 \}$ is the set of states where at least one species is extinct (i.e., it has a density equal to zero). The per-capita form of the flow and kick mapping imply that for any initial condition $z \in \S_0$, the future states $\kappa^n(z)$ remain in $\S_0$ for all time, capturing the principle of ``no cats, no kittens'' of closed ecological systems. The persistent set $\S_+=\S\setminus \S_0$ corresponds to states in which all species have positive densities. Assumptions \textbf{A1} (via existence and uniqueness) and \textbf{A3} (via positivity) imply that for any initial condition $z\in \S_+$, future states $\kappa^n(z)$ remain in $\S_+$ for all time. Notably, as with nearly all deterministic models, extinction can occur only asymptotically, not in finite time. 

Permanence, defined more precisely in Section~\ref{sec:main}, corresponds to the extinction set $\S_0$ being a repellor, that is, if the system starts with all species present ($z \in \S_+$), then eventually it becomes bounded away from the extinction set.
To understand permanence for flow-kick models, I use invasion growth rates, which measure the average per-capita growth rate of a population when introduced at infinitesimally small densities. {At equilibria of the kick-to-kick map $\kappa$, these invasion growth rates correspond to the logarithms of the appropriate eigenvalues of the Jacobian matrix. To extend the analysis beyond equilibria, it is mathematically convenient to define these invasion growth rates using ergodic theory.} Section~\ref{subsec:IGR} provides some background on ergodic theory, introduces explicit formulas for invasion growth rates, highlights their key properties, and illustrates how to calculate these rates for the previously introduced examples.

\subsection{Invasion growth rates}~\label{subsec:IGR} 
Biologically, invasion growth rates measure how a species would grow if introduced at very low densities into an established community. If the invasion growth rate is positive, the species can increase when rare and potentially establish itself. If negative, the species cannot invade. These rates are fundamental in predicting which species combinations can persist together and which cannot, essentially providing a mathematical foundation for understanding coexistence. To define invasion growth rates, we need some ergodic theory; see, for example, ~\citep{mane-1983-ergodic,katok-etal-1995-dynamical-systems}.  

Consider a forward orbit $\{z,\kappa(z),\kappa^2(z),\dots\}$ of the kick-to-kick map $\kappa:\S\to\S$ (see \eqref{eq:kick-to-kick-map}) for some initial condition $z\in \S$. For any continuous function $h:\S\to \R$ (an ``observable''), it is natural to ask when the long-term average $\lim_{n\to\infty} \frac{1}{n}\sum_{m=0}^{n-1} h(\kappa^m(z))$ is well defined. Ergodic theory provides an answer in terms of invariant measures. A Borel probability measure $\mu$ on $\S$ is \emph{invariant} for the kick-to-kick dynamics \eqref{eq:kick-to-kick-map} if $\int h(z)\mu(dz)=\int h(\kappa(z))\mu(dz)$ for any continuous bounded function $h:\S\to\R$. That is, the average value of $h$ when randomly choosing the initial condition $z$ according to $\mu$ equals the average value of $h$ in the next time step. 

An invariant probability measure $\mu$ is \emph{ergodic} if it cannot be written as a non-trivial convex combination of two invariant probability measures, that is, if $\mu=\alpha \mu_1+(1-\alpha)\mu_2$ for two distinct invariant measures $\mu_1,\mu_2$, then $\alpha=1$ or $\alpha=0$. Intuitively, the ergodic measures represent the most basic statistical behaviors of the system that cannot be further decomposed. The simplest example of an ergodic probability measure is a Dirac measure $\mu=\delta_{z^*}$ associated with an equilibrium $z^*\in \S$ of $\kappa$. This measure is defined by $\int h(z)\mu(dz)=h(z^*)$ for every continuous bounded function $h:\S\to \R$. Similarly, if $z^*$ is a periodic point of period $N$ (i.e., $\kappa^N(z^*)=z^*$), then the probability measure $\mu=\frac{1}{N}\sum_{n=0}^{N-1}\delta_{\kappa^n(z^*)}$ defined by averaging along this periodic orbit is an ergodic measure. More generally, the ergodic theorem~\citep[Chapter 6]{mane-1983-ergodic} implies that for every ergodic measure $\mu$ there exists an initial condition $\tilde z$ such that $\mu$ is determined by averaging along the forward orbit of $\tilde z$, that is, $\int h(z)\mu(dz)=\lim_{N\to\infty} \frac{1}{N}\sum_{n=0}^{N-1} h(\kappa^n(\tilde z))$ for all continuous bounded $h:\S\to\R$. In fact, such points $\tilde z$ make up almost all the support of $\mu$ (i.e., the set of points that do not satisfy this property has a measure zero with respect to $\mu$). This means that for typical initial conditions within the support of an ergodic measure, time averages along trajectories converge to the average with respect to the measure.

Ergodic measures correspond to the long-term statistical behavior of subsets of species, what we might call ecological communities. To see why, for any subset of species $I\subset\{1,2,\dots,k\}$, let
\[
\S_I=\{z\in \S: x_i>0 \mbox{ if and only if } i\in I\}
\]
be the set of states for which only the species in $I$ are present (i.e., have non-zero density). Since these sets $\S_I$ are invariant for kick-to-kick dynamics, an ergodic measure $\mu$ is uniquely associated with a set of species $I(\mu) \subseteq \{1,2,\dots,k\}$, which is the unique subset such that $\mu\left( \S_{I(\mu)} \right)=1$. I call $I(\mu)$ \emph{the species support of the ergodic measure $\mu$}. {For a Dirac measure $\mu=\delta_{z^*}$ at an equilibrium $z^*=(x^*,y^*)$ of the kick-to-kick map $\kappa$, the species support $I(\mu)$ corresponds to the set of species whose densities are positive at the equilibrium, i.e., $i\in \{1,2,\dots,k\}$ such that $x_i^*>0$.}

To define the average per-capita growth rate of species $i$, let $\pi_i:\mathcal{K}\to [0,\infty)$ be the projection of a state $z$ to the density of the $i$-th species, i.e. $\pi_i(z)=x_i$ for $z=(x,y)\in \mathcal{K}$. {As $\int_0^t \frac{c'(t)}{c(t)}dt=\ln \frac{c(t)}{c(0)}$ for any continuously, differentiable, positive function $c:\R\to\R$}, $\ln \frac{\pi_i(\kappa(z))}{\pi_i(z)}$ {for any initial condition $z$ with $\pi_i(z)>0$} equals  
\begin{equation}\label{eq:per-capita_growth_rate}
\mbox{ the per-capita growth rate }r_i(z)=\int_0^{\tau(z)} f_i(z.t)dt+\ln F_i(z.\tau(z)) 
\end{equation}
This represents the per-capita growth rate of species $i$ over a complete flow-kick cycle that begins in the state $z$. Notice that $r_i(z)$ is well defined even if $x_i=0$. In this case, $r_i(z)$ describes the rate at which species $i$ would increase if introduced at infinitesimally small densities. 

For any probability measure $\mu$, we define 
\begin{equation}\label{eq:per-capita_growth_rate_at_mu}
r_i(\mu)=\int r_i(z)\mu(dz) \mbox{, the average per-capita growth rate of species $i$ with respect to $\mu$.}
\end{equation}
$r_i(\mu)$ measures on average how species $i$ would grow if, in the long-term, the state of system follows the distribution given by $\mu$.  If $\mu$ is an ergodic measure, Birkhoff's ergodic theorem implies that 
\begin{equation}\label{eq:BET:one}
\lim_{n\to\infty}\frac{1}{n}\sum_{m=0}^{n-1}r_i(\kappa^m(z))=r_i(\mu)
\end{equation}
for $\mu$-almost every $z$, that is, $\mu\left(\left\{z\in \S\mbox{ such that \eqref{eq:BET:one} holds}\right\}\right)=1$.  {In the case of an ergodic measure $\mu=\frac{1}{N}\sum_{n=1}^N \delta_{\kappa^n(z^*)}$ given by a period-$N$ point $z^*=(x^*,y^*)$, the average per-capita growth rate of species $i$ is given by the arithmetic average of the per-capita growth rates along the periodic orbit, i.e., $r_i(\mu)=\frac{1}{N}\sum_{n=1}^N r_i(\kappa^n(z^*))$. }

{For} an ergodic measure $\mu$, we intuitively expect the average per-capita growth rate of a species supported by $\mu$ to be zero. This is because species densities that remain bounded away from both zero and infinity must have a long-term average growth rate of zero. The following lemma confirms this intuition.

\begin{lemma}\label{lemma:zero}
Let $\mu$ be an ergodic probability measure for the kick-to-kick map $\kappa$ with compact support. Then
$r_i(\mu) = 0$ for all $i \in I(\mu)$. 
\end{lemma}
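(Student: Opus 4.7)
The plan is to turn the per-capita growth rate $r_i(z)$ into a telescoping log-ratio along the orbit, apply Birkhoff's ergodic theorem to the resulting sum, and then use compactness of $\supp \mu$ together with Poincar\'e recurrence to pin the time average to zero.

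First, for $i \in I(\mu)$ and $z \in \S_{I(\mu)}$ we have $\pi_i(z) = x_i > 0$, so the flow equation $\dot x_i = x_i f_i(z)$ gives $\ln x_i(\tau(z)) - \ln x_i = \int_0^{\tau(z)} f_i(z.t)\,dt$, while the kick contributes an extra $\ln F_i(z.\tau(z))$. Hence on $\S_{I(\mu)}$,
\[
r_i(z) \;=\; \ln \pi_i(\kappa(z)) - \ln \pi_i(z).
\]
Telescoping over $n$ iterates of $\kappa$ yields $\sum_{m=0}^{n-1} r_i(\kappa^m(z)) = \ln \pi_i(\kappa^n(z)) - \ln \pi_i(z)$.

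Next I verify that $r_i$ is $\mu$-integrable so Birkhoff applies. By assumptions \textbf{A1}--\textbf{A3}, $f_i$, $\tau$, and $\ln F_i$ are continuous on compact subsets of $\mathcal{K}$ and $\T$; since $\supp \mu$ is compact and $z \mapsto z.\tau(z)$ is continuous, $r_i$ is continuous, hence bounded, on $\supp \mu$. Birkhoff's ergodic theorem then gives, for $\mu$-almost every $z$,
\[
\frac{1}{n}\bigl[\ln \pi_i(\kappa^n(z)) - \ln \pi_i(z)\bigr] \;\longrightarrow\; r_i(\mu).
\]

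Finally I bound the left-hand side from above and below using compactness. Since $\kappa^n(z) \in \supp\mu$ is a compact subset on which the continuous function $\pi_i$ is bounded above, $\limsup_{n\to\infty} \frac{1}{n}\ln \pi_i(\kappa^n(z)) \leq 0$, giving $r_i(\mu) \leq 0$. For the matching lower bound, Poincar\'e recurrence says that for $\mu$-almost every $z$ there is a subsequence $n_k\to\infty$ with $\kappa^{n_k}(z)\to z$; restricting to $z\in\S_{I(\mu)}$ (a full-measure set) and using continuity of $\pi_i$ with $\pi_i(z)>0$, we get $\ln \pi_i(\kappa^{n_k}(z))\to \ln \pi_i(z)$, so $\frac{1}{n_k}\bigl[\ln \pi_i(\kappa^{n_k}(z))-\ln \pi_i(z)\bigr]\to 0$. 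Combining the two bounds forces $r_i(\mu)=0$.

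The one subtlety is that $\ln \pi_i$ is only bounded above on $\supp\mu$, not necessarily below, since densities in $\supp\mu$ could a priori approach zero. Compactness alone therefore yields only the upper bound $r_i(\mu)\le 0$; the equality requires Poincar\'e recurrence to produce a subsequence along which $\pi_i(\kappa^{n_k}(z))$ remains bounded away from zero, namely by returning to a neighborhood of the starting state $z$ where $x_i>0$. This recurrence step is the heart of the argument and is what genuinely uses ergodicity (via a full $\mu$-measure set of recurrent points) rather than just invariance.
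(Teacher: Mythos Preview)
Your proof is correct and follows essentially the same approach as the paper: write $r_i$ as the telescoping $\ln\pi_i\circ\kappa - \ln\pi_i$, apply Birkhoff, and then use Poincar\'e recurrence together with compactness to force the time-averaged log-ratio to vanish. The only cosmetic difference is that the paper applies Poincar\'e recurrence to a fixed compactly contained open set $V\subset \S_{I(\mu)}$ (obtaining a uniform two-sided bound $1/\delta \le \pi_i(\kappa^{n_m}(z)) \le \delta$), whereas you invoke the metric form of recurrence to return to the starting point $z$ itself; either variant does the job, and in fact your subsequence argument already gives $r_i(\mu)=0$ directly, making your separate upper-bound step redundant.
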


The proof strategy follows from an argument given for ODE models without auxiliary variables $y$ found in \citep[Lemma 5.1]{schreiber-2000-permanence}. 

\begin{proof}  Since $\mu(\S_{I(\mu)})=1$, Birkhoff's Ergodic Theorem implies that there exists an invariant Borel set $U\subseteq  \S_{I(\mu)}$ such that $\mu(U)=1$ and
\begin{equation}\label{eq:bet}
\lim_{n\to\infty}\frac{1}{n}\sum_{m=0}^{n-1}r_i(\kappa^m(z))=r_i(\mu)
\end{equation}
whenever $z\in U$. Choose an open set $V$ such
that its closure $\overline {V}$ is contained in $\S_{I(\mu)}$, $\overline{V}$ is compact, and $\mu(V\cap
U)>0$. By the Poincar\'{e} recurrence theorem, there exists $z\in
V\cap U$ and an increasing sequence of integers $n_m\uparrow \infty$ such that $\kappa^{n_m}(z)\in V$ for all $m\ge 1$. Since $\overline{V}$ is compact, there exists a
$\delta>0$ such that
    \begin{equation}\label{eq:bounds}
    1/\delta\le \pi_i(\kappa^{n_m}(z)) \le \delta
    \end{equation}
for all $m$. {One has} 
\[
\begin{aligned}
\frac{1}{n}\ln \frac{\pi_i(\kappa^{n}(z))}{\pi_i(z)}=& \frac{1}{n}\ln \prod_{m=0}^{n-1} \frac{\pi_i(\kappa^{m+1}(z))}{\pi_i(\kappa^{m}(z))}=\frac{1}{n}\sum_{m=0}^{n-1}\ln \frac{\pi_i(\kappa^{m+1}(z))}{\pi_i(\kappa^{m}(z))}\\
 =& \frac{1}{n}\sum_{m=0}^{n-1} \left(\int_0^{\tau(\kappa^m(z))} f_i (\kappa^m(z).t)dt+ \ln F_i(\kappa^m(z).\tau(\kappa^m(z))\right)=\frac{1}{n}\sum_{m=0}^{n-1}r_i(\kappa^m(z))
\end{aligned}
\]  
Since $\pi_i(\kappa^{n_m}(z))$ is bounded both above and below by positive constants as shown in \eqref{eq:bounds}, ${\frac{1}{n_m}\ln}\frac{\pi_i(\kappa^{n_m}(z))}{\pi_i(z)}$ must approach zero as $m \to \infty$. Combined with equation \eqref{eq:bet}, this proves that
\[
    r_i(\mu) = \lim_{m\to\infty}\frac{1}{n_m}\sum_{s=0}^{n_m-1}r_i(\kappa^s(z)) =0.
\]
for all species $i \in I(\mu)$.
\end{proof}

Let us revisit the three examples introduced in Section~\ref{sec:models} and derive their invasion growth rates for the ergodic measures supported on the extinction set. 

\subsection{$r_i(\mu)$ for Lotka-Volterra systems with periodic disturbances}\label{sec:LV:ri}

Consider the disturbed Lotka-Volterra models introduced in Section~\ref{sec:LV}. For these systems, given an ergodic measure $\mu(dx)$, the average per-capita growth rate of species $i$ is:
\[
r_i(\mu) = \int\left(\int_0^\tau f_i(x.t)dt\right)\mu(dx) + \ln(1-d_i),
\]
where $f_i(x) = \sum_j a_{ij}x_j + b_i$ represents the per-capita growth rate of species $i$.

These average per-capita growth rates typically admit explicit solutions. To see why, let $x_i^*(\mu)$ denote the average density of species $i$ with respect to $\mu$:
\[
x_i^*(\mu) = \frac{1}{\tau} \int \left(\int_0^\tau \pi_i(z.t) dt\right)\mu(dx)
\]
where $\pi_i:\S\to\R$ is the projection onto the density of species $i$, that is, $\pi_i(z)=x_i$ for $z=(x,y)$.  Ecologically, $x_i^*(\mu)$ represents the time-averaged density of species $i$ in the community represented by $\mu$. These average densities allow us to calculate invasion growth rates without needing to track the detailed temporal dynamics between disturbances. Taking advantage of the linearity of the per-capita growth rate, $r_i(\mu)$ can be expressed in terms of these average densities:
\begin{equation}\label{eq:LV:ri}
r_i(\mu) = \tau\left(\sum_j a_{ij}x_j^*(\mu) + b_i\right) + \ln(1-d_i).
\end{equation}

The average densities $x_i^*(\mu)$ can be determined by observing that: (i) Lemma~\ref{lemma:zero} implies that $r_i(\mu) = 0$ for all species $i \in I(\mu)$ supported by $\mu$, and (ii) $x_i^*(\mu) = 0$ for all species $i \notin I(\mu)$ absent from the support of $\mu$. These properties yield a system of linear equations:
\begin{equation}\label{eq:LV:x*}
\begin{aligned}
x_i^*(\mu) &= 0 & \text{for } i \notin I(\mu) \\
\sum_j a_{ij}x_j^*(\mu) + b_i + \frac{\ln(1-d_i)}{\tau}&= 0 & \text{for } i \in I(\mu)
\end{aligned}
\end{equation}
When this system admits a unique solution (which occurs generically), explicit expressions for all average per-capita growth rates $r_i(\mu)$ can be obtained by substituting the solved values of $x_i^*(\mu)$ into equation \eqref{eq:LV:ri}.

\subsection{$r_i(\mu)$ for Competition models for serial passage experiments}\label{sec:serial:ri}
Consider the competition model in Section~\ref{sec:serial} for serial passage experiments. The kick-to-kick function for this model is:
\[
\kappa(z)=(x_1e^{\rho_1(y)\tau(z)},x_2e^{\rho_2(y)\tau(z)},(y+1) \mbox{ mod }2)\mbox{ where }z=(x_1,x_2,y)
\]
and $\tau$ is implicitly defined by:
\[
\sum_i x_i e^{\rho_i (y) \tau(z)}={1}.
\]

The extinction set $\S_0$ for the kick-to-kick map $\kappa$ consists of four points $(x_1,x_2,y)=(\delta,0,0),(0,\delta,0),(\delta,0,1),(0,\delta,1)$, which correspond to two period-two orbits: $(\delta,0,0)\rightarrow(\delta,0,1)\rightarrow(\delta,0,0)$ and $(0,\delta,0)\rightarrow(0,\delta,1)\rightarrow(0,\delta,0)$. The only ergodic invariant measures in the extinction set are Dirac measures supported on these periodic orbits: $\mu_1(dz)$ with $\mu_1(\{(\delta,0,0)\})=\mu_1(\{(\delta,0,1)\})=1/2$ and $\mu_2(dz)$ with $\mu_2(\{(0,\delta,0)\})=\mu_2(\{(0,\delta,1)\})=1/2$.

For {the} ergodic measure ${\mu_1}$, the average per-capita growth rate of species $i$ is:
\[
r_i(\mu_1)=\int \rho_i(y)\tau(z)\mu_1(dz) + \ln \delta {=\frac{1}{2}\sum_{y=0}^1\rho_i(y)\tau(\delta,0,y).}
\]
{Since }$\tau(\delta,0,y)=\frac{1}{\rho_1(y)}\ln\frac{1}{\delta}$, {one gets} the following expressions for $r_i(\mu_1)$ and, through analogous reasoning, $r_i(\mu_2)$:
\begin{equation}~\label{eq:serial:ri}
r_i(\mu_1) = \frac{\ln \frac{1}{\delta}}{2}\sum_{y \in \{0,1\}} \left(\frac{\rho_i(y)}{\rho_1(y)}-1 \right) \mbox{ and }r_i(\mu_2)= \frac{\ln \frac{1}{\delta}}{2}\sum_{y \in \{0,1\}} \left(\frac{\rho_i(y)}{\rho_2(y)}-1 \right) \mbox{ for }i=1,2.
\end{equation}
{Note: These expressions for $r_i(\mu_j)$ correspond to linearizing $\kappa$ along the period $2$ points and taking the logarithms of the appropriate eigenvalues of the product of the Jacobian matrices.}

{Equation}~\eqref{eq:serial:ri} show how invasion growth rates depend on the ratio of growth rates in each environment, weighted by the time spent in each environment. A species can invade if, on average in both environments, its growth rate relative to the resident species exceeds the penalty for dilution.

\subsection{$r_i(\mu)$ for the Two Patch Pulsed Consumer-Resource Models}\label{sec:2patch:ri}
Consider the two-patch model of consumer-resource interactions in Section~\ref{sec:2patch}. The state space for the kick-to-kick map is $\S=[0,\infty)\times [0,\infty)^4 \times [0,1]$ and the extinction set is $\S_0=\{(0,y)\in \S \}$, which corresponds to the extinction of the consumer.

For an initial condition $z\in \S_0$ on the extinction set, the $y_1,y_2,y_3,y_4$ coordinates of the forward orbit $\{z,\kappa(z),\kappa^2(z),\dots \}$ of the kick-to-kick map converge to $(\alpha_1/\beta_1,\alpha_2/\beta_2,0,0)$. 
At these equilibrated values, the kick-to-kick dynamics of the coordinate $y_5$ corresponds to how the fraction of consumers in patch $1$ changes in the limit of an infinitesimally small consumer population. These dynamics are determined by the fractional linear map:
\[
y_5\mapsto \frac{d\gamma_1  y_5+(1-d)\gamma_2 (1-y_5)}{\gamma_1  y_5+\gamma_2 (1-y_5)}, \text{ where }\gamma_i=e^{-m_i\tau}(1+\theta\tau a_i \alpha_i/\beta_i)
\]
that has a unique globally stable equilibrium:
\begin{equation}\label{eq:y5*}
    y_5^*=\frac{-2\gamma_2+d\gamma_2+d\gamma_1+\sqrt{d^2\gamma_2^2-8d\gamma_1\gamma_2+2d^2\gamma_1\gamma_2+4\gamma_1\gamma_2+d^2\gamma_1^2}}{2\left(\gamma_1-\gamma_2\right)}.
\end{equation}  

Therefore, the only ergodic measure $\mu$ for the kick-to-kick map supported in the extinction set $\S_0$ is the Dirac measure at equilibrium $z^*=(0,\alpha_1/\beta_1,\alpha_2/\beta_2,0,0,y_5^*)$. The average per-capita growth rate of consumers at this ergodic measure is:
\begin{equation}\label{eq:2patch:ri}
   r_i(\mu)=y_5^*\ln \gamma_1 + (1-y_5^*)\ln\gamma_2
\end{equation}

This average per-capita growth rate and the equilibrium value $y_5^*$ relate to the dominant eigenvalue and eigenvector of the extinction equilibrium with respect to the original $(C_1,C_2)$ coordinate system. To see why, the derivative matrix of the kick-to-kick consumer dynamics at the extinction equilibrium $(R_1,R_2,C_1,C_2)=(\alpha_1/\beta_1,\alpha_2/\beta_2,0,0)$ is:
\begin{equation}\label{eq:matrix:2patch}
\begin{pmatrix}
   (1-d)\gamma_1 & d \gamma_2\\
   d \gamma_1 &(1-d)\gamma_2
\end{pmatrix}
\end{equation}
The dominant eigenvalue of this matrix is equal to $e^{r_i(\mu)}$ and the associated eigenvector is $\begin{pmatrix} y_5^* & 1-y_5^*\end{pmatrix}^\intercal$ where $^\intercal$ denotes the transpose. This connection between invasion growth rates and eigenvalues of the linearized system demonstrates how the flow-kick framework extends classical stability analysis. The dominant eigenvalue $e^{r_i(\mu)}$ determines whether small consumer populations grow ($r_i(\mu) > 0$) or decline ($r_i(\mu) < 0$), while the associated eigenvector gives the spatial distribution that would be observed during this growth or decline.

\section{Main results}\label{sec:main}

To characterize permanence using invasion growth rates, I take two complementary approaches. The more general approach uses Morse decompositions in the extinction set $\S_0$, relying on earlier topological characterizations of permanence~\citep{butler-etal-1986-persistence,hofbauer-so-1989-persistence,garay-1989-persistence}. The second approach uses invasion graphs~\citep{hofbauer-schreiber-2022-permanence} that identify possible trajectories that connect different subsets of species according to invasion growth rates. These invasion graphs provide a biologically meaningful way to construct Morse decompositions.

To state the main results, I first assume that the dynamics of the kick-to-kick map are uniformly bounded from above, meaning the system is dissipative. Let $\|z\|=\sum_i |x_i|+\sum_j |y_j|$ denote the $L^1$ norm for $z=(x,y)\in [0,\infty)^k\times \R^\ell$. This norm corresponds to the total density of all species when restricted to the coordinates $x$, that is, $\|x\|=\sum_i x_i$ for $x\in [0,\infty)^k$.

\begin{description}
   \item[A5] The kick-to-kick map $\kappa$ is \emph{dissipative}: There exists $K>0$ such that \[\limsup_{n\to \infty}\|\kappa^n(z)\|\le K\]
   for all $z\in \S$.
\end{description}

The dissipativeness of $\kappa$ implies that there is a global attractor $\Gamma\subset\S$ for $\kappa$. To define this precisely, I recall some definitions from dynamical systems. For a compact set $C\subset \S$, its $\omega$ limit set is defined as:
\[
\omega(C):=\cap_{N\ge 1}\overline{\cup_{n\ge N}\kappa^n(C)}
\]
where  $\kappa^n(C)$ represents the $n$-th iterate of $\kappa$ applied to the set $C$ i.e. $\kappa(C)=\cup_{z\in C}\kappa^n(z)$. The $\omega$-limit set characterizes the long-term behavior of the forward orbits of the kick-to-kick dynamics. A compact set $\Gamma\subset \S$ is an \emph{attractor} if it has a compact neighborhood $U$ such that all trajectories starting in $U$ eventually converge uniformly to $\Gamma$ i.e. $\omega(U)=\Gamma$.  An attractor $\Gamma$ is a \emph{global attractor} if $\omega(z)\subset \Gamma$ for all $z\in \S$.

Permanence ensures that no species goes extinct regardless of initial conditions, provided that all species are initially present. Under Assumption \textbf{A5}, the kick-to-kick map is \emph{permanent} if there is a minimal density $M>0$ such that for any initial condition $z \in \S_+$ with all species present, 
\begin{equation}\label{permanence}
\liminf_{n\rightarrow \infty} \pi_i(\kappa^n(z)) \geq M \mbox{ for all }i=1,2,\dots,k
\end{equation}
where $\pi_i(z)$ corresponds to the density of species $i$. Permanence implies that if all species are initially present, they will coexist in the long term, even when faced with rare but large perturbations or frequent small perturbations \citep{schreiber-2006-persistence}. In terms of attractors, permanence corresponds to a positive attractor $A\subset \S_+$ whose basin of attraction is $\S_+$, which means $\omega(z)\subset A$ for all $z\in \S_+$.

The following lemma proves that permanence of the kick-to-kick map implies the permanence for the continuous-time flow-kick dynamics. A proof is given at the end of Section~\ref{proof:main_theorem}.

\begin{lemma}\label{lem:kick-to-kick-permanence-implies-flow-kick-permanence}
Assume assumptions \textbf{A1}--\textbf{A5} hold. If the kick-to-kick map $\kappa$ is permanent, then the flow-kick dynamics $\Phi$, as defined by \eqref{eq:flow-kick}, is dissipative and permanent: There are constants $\widetilde K\ge \widetilde M>0$ such that
\[
\limsup_{t\rightarrow \infty} \|\Phi(t,z)\| \leq \widetilde K 
\]
whenever $z\in \S$ and
\[
\liminf_{t\rightarrow \infty} \pi_i(\Phi(t,z)) \geq \widetilde M \mbox{ for all }i=1,2,\dots,k
\]
whenever $z\in \S_+$.
\end{lemma}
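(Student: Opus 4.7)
\noindent\emph{Proof plan.} The plan is to reduce the continuous-time claim to the already-assumed discrete-time permanence of $\kappa$ by controlling every flow segment between successive kicks on a suitable compact region. The flow-kick trajectory $\Phi(\cdot,z)$ is, by construction, a concatenation of flow arcs of the form $t\mapsto \kappa^n(z).t$ over $0\le t\le \tau(\kappa^n(z))$, so it is enough to obtain uniform bounds on these arcs for all sufficiently large $n$.

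The first step is to build a compact absorbing set for $\kappa$. By \textbf{A5} and finite-dimensionality, the closed bounded set $C:=\S\cap\{w:\|w\|\le K+1\}$ is compact, and for every $z\in\S$ one has $\kappa^n(z)\in C$ for all sufficiently large $n$. The second step is to inflate $C$ to include the intermediate flow states. By continuity of $\tau$ on the compact set $C$, $\tau$ attains a positive minimum $\tau_{\min}$ and a finite maximum $\tau_{\max}$ there, and by continuous dependence of the flow on initial data (a consequence of \textbf{A1}) together with \textbf{A2}, the ``flow tube''
\[
C':=\{w.t : w\in C,\ 0\le t\le \tau(w)\}
\]
is the continuous image of a compact set under $(w,t)\mapsto w.t$, and is therefore itself compact and contained in $\mathcal K$. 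On $C'$ each $f_i$ is continuous, and hence bounded by some $B$.

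Once these sets are in place, both conclusions follow. For dissipativeness, once the kick-to-kick orbit enters $C$ every subsequent flow arc lies in $C'$, yielding $\|\Phi(t,z)\|\le \widetilde K:=\sup_{w\in C'}\|w\|$ for all sufficiently large $t$; the positive lower bound $\tau_{\min}$ ensures $T(n,z)\to\infty$, so $t\to\infty$ really does correspond to $n\to\infty$. For permanence, integrating the per-capita form $dx_i/dt=x_if_i$ along a single arc gives
\[
\pi_i(w.t)=\pi_i(w)\exp\!\left(\int_0^t f_i(w.s)\,ds\right)\ge \pi_i(w)\,e^{-B\tau_{\max}}
\]
for $w\in C$ and $0\le t\le \tau(w)$, so no arc can deplete $\pi_i$ by more than the factor $e^{-B\tau_{\max}}$. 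Combining this with the kick-to-kick lower bound $\pi_i(\kappa^n(z))\ge M/2$ valid for all large $n$ (from the hypothesized permanence of $\kappa$ applied to $z\in\S_+$) yields $\liminf_{t\to\infty}\pi_i(\Phi(t,z))\ge (M/2)e^{-B\tau_{\max}}=:\widetilde M$. The one nontrivial step is the compactness bookkeeping used to build $C'$ and extract the uniform bounds on $\tau$ and the $f_i$; this is the price paid for the state-dependent kick time, and for constant $\tau$ it would collapse to a one-line remark on a single kick interval.
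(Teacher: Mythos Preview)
Your proof is correct and follows essentially the same route as the paper: both arguments enclose the eventually-absorbing region for $\kappa$ in a compact set (the paper uses $\{\|z\|\le 1.01K\}\cap\S$, you use $\{\|w\|\le K+1\}\cap\S$), invoke continuity of $\tau$ and the flow to bound all flow arcs over that set, and then integrate the per-capita equation $dx_i/dt=x_if_i$ to show that a single arc can only deplete $\pi_i$ by a fixed multiplicative factor. Your explicit remark that $\tau_{\min}>0$ forces $T(n,z)\to\infty$ is a nice touch that the paper leaves implicit.
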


\subsection{Permanence and Extinction via Morse decompositions} 

I first state conditions for permanence using Morse decompositions, which provide a way to decompose the dynamics on the extinction set into fundamental building blocks. To define a Morse decomposition, I need to introduce the concept of a negative orbit.  A sequence $\mathcal{O}^-=\{z(0),z(-1),z(-2),\dots\} \subset \S$ is a \emph{negative orbit} of $\kappa$ if $\kappa(z(-n))=z(-n+1)$ for all $n=1,2,\dots$. That is, the sequence is consistent with the kick-to-kick dynamics when moving backward in time. As the mapping $\kappa$ need not be a homeomorphism, a negative orbit may not exist for a given initial condition $z\in \S$ or may not be unique when it does exist. However, since the global attractor $\Gamma$ is invariant (i.e., $\kappa(\Gamma)=\Gamma$), negative orbits exist for all initial conditions in $\Gamma$ and can be chosen to lie entirely in $\Gamma$. The $\alpha$-limit set of a negative orbit $\mathcal{O}_-$ is
\[
\alpha(\mathcal{O}_-)=\{\widetilde z \in \S: \lim_{n_k\to \infty} z(-n_k)=\widetilde z \mbox{ for some sequence }n_k\uparrow \infty\}
\]
The $\alpha$-limit set characterizes the asymptotic behavior of the negative orbit in backward time.

Let $\Gamma_0=\Gamma \cap \S_0$ be the restriction of $\Gamma$ to the extinction set and $\Gamma_+=\Gamma\cap \S_+$ be the restriction of $\Gamma$ to where all species persist. {Recall, a} set $A\subset\S$ is \emph{invariant} for $\kappa$ if $\kappa(A)=A$, and an invariant set $A$ is \emph{isolated} if there exists a closed neighborhood $U$ of $A$ such that $A$ is the maximal invariant set in $U$. 

\begin{definition}
A collection of sets $\mathcal{M}=\{M_1, M_2, ..., M_\ell\}$ is a Morse decomposition for $\Gamma_0$ if $M_1, M_2, ..., M_\ell$ are pairwise disjoint isolated invariant compact sets, called Morse sets, such that $M_i\subset \Gamma_0$ and for every $z\in \Gamma_0\backslash \cup_{m=1}^\ell M_m$ there are integers $i< j$ such that $\omega(z)\subset M_j$ and $\alpha(\mathcal{O}_-)\subset M_i$ for all negative orbits $\mathcal{O}^-=\{z(n)\}_{n=-\infty}^\infty \subset \Gamma_0$ with $z(0)=z$. 
\end{definition}

Intuitively, a Morse decomposition divides the extinction set $\Gamma_0$ into isolated invariant pieces that are ordered in such a way that dynamics can only move from lower-indexed to higher-indexed pieces, creating a hierarchical structure of the flow. Morse decompositions always exist but are not necessarily unique. Trivially, one Morse decomposition of $\Gamma_0$ is $\{\Gamma_0\}$ itself. However, Morse decompositions become more useful when they are more refined than this trivial decomposition.

The first main theorem uses Morse decompositions to provide conditions {for permanence} in terms of invasion growth rates. 

\begin{theorem}[Permanence via Morse decompositions]\label{thm:main}
Assume assumptions \textbf{A1}--\textbf{A5} hold. Let $\mathcal{M}=\{M_1, M_2, \dots M_\ell\}$ be a Morse Decomposition for $\Gamma_0$, where $\Gamma_0=\S_0\cap \Gamma$ and $\Gamma$ is the global attractor for the kick-to-kick map $\kappa$. If for each Morse set $M_m$
\begin{equation}\label{eq:thm:main}
\max_ir_i(\mu)>0 \mbox{ for all invariant measures $\mu$ with } \mu(M_m)=1,
\end{equation}
then \eqref{eq:flow-kick} is permanent.
\end{theorem}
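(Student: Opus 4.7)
The plan is to reduce flow-kick permanence to permanence of the kick-to-kick map $\kappa$ via Lemma~\ref{lem:kick-to-kick-permanence-implies-flow-kick-permanence}, and then to show that $\Gamma_0$ is a uniform repellor in the global attractor $\Gamma$ using an average Lyapunov function built from the invasion growth rates, in the spirit of \citet{garay-hofbauer-2003-permanence} and \citet{hofbauer-schreiber-2022-permanence}. After the reduction, a Butler--McGehee / Hofbauer--So type topological argument, applied to the dissipative map $\kappa$ with the Morse decomposition $\mathcal{M}$, shows that permanence of $\kappa$ is equivalent to each Morse set $M_m$ being externally unstable with respect to $\S_+$: for every $m$ there must exist a neighborhood $U_m$ of $M_m$ such that no forward orbit starting in $\S_+$ remains in $U_m$ for all time. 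I would therefore fix an $M_m$ and construct such a $U_m$ from the hypothesis \eqref{eq:thm:main}.

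The substance of the proof is a minimax step followed by an averaging step. Let $\mathcal{I}_m$ be the set of $\kappa$-invariant Borel probability measures with $\mu(M_m)=1$. By compactness of $M_m$, weak-$*$ continuity of $\kappa_*$, and continuity of each $r_i\colon\S\to\R$ (which follows from \textbf{A1}--\textbf{A4} together with continuous dependence of the flow $z.t$ and of the kick time $\tau(z)$ on $z$), $\mathcal{I}_m$ is nonempty, convex and weak-$*$ compact, and each $\mu\mapsto r_i(\mu)$ is a continuous affine functional. The hypothesis $\max_i r_i(\mu)>0$ on $\mathcal{I}_m$ combined with compactness gives $\inf_{\mu\in\mathcal{I}_m}\max_i r_i(\mu)>0$, so Sion's minimax theorem applied to the bilinear pairing $(\mu,p)\mapsto \sum_i p_i r_i(\mu)$ on $\mathcal{I}_m\times\Delta_k$ yields weights $p_1^m,\dots,p_k^m\geq 0$ (not all zero) and a constant $\delta_m>0$ with $\sum_i p_i^m r_i(\mu)\geq\delta_m$ for every $\mu\in\mathcal{I}_m$. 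Now introduce the average Lyapunov function $P_m(z)=\prod_i \pi_i(z)^{p_i^m}$, which vanishes on $\S_0$ and is positive on $\S_+$, and observe that by the telescoping computation underlying \eqref{eq:per-capita_growth_rate} one has $\ln\bigl(P_m(\kappa(z))/P_m(z)\bigr)=\sum_i p_i^m r_i(z)$. Suppose for contradiction that some orbit $\{\kappa^n(z_0)\}\subset\S_+$ stays in a small neighborhood $V$ of $M_m$ for all $n\geq 0$. The empirical measures $\mu_N=\frac{1}{N}\sum_{n=0}^{N-1}\delta_{\kappa^n(z_0)}$ have weak-$*$ accumulation points which, as $V$ is shrunk towards $M_m$, must lie in $\mathcal{I}_m$; testing these accumulation points against the continuous functional $\sum_i p_i^m r_i(\cdot)$ forces $\frac{1}{N}\ln P_m(\kappa^N(z_0))-\frac{1}{N}\ln P_m(z_0)$ to have limit at least $\delta_m$. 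This contradicts the boundedness of $P_m$ on the compact attractor $\Gamma$, producing the required $U_m$.

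Iterating over the finitely many Morse sets $M_1,\dots,M_\ell$ and invoking the topological criterion gives permanence of $\kappa$; Lemma~\ref{lem:kick-to-kick-permanence-implies-flow-kick-permanence} then transfers permanence to the flow-kick mapping $\Phi$. The main obstacle is the averaging step that turns the measure-theoretic lower bound $\sum_i p_i^m r_i(\mu)\geq\delta_m$ into a trajectory-level statement near $M_m$: one must carefully use dissipativity and isolation of $M_m$ to ensure that weak-$*$ limits of empirical measures of a trapped $\S_+$-orbit are genuinely supported on $M_m$ rather than on an adjacent Morse piece, and one must accommodate the possibility that $p_i^m=0$ for some species (so that $P_m$ ignores those coordinates but the species with $p_i^m>0$ still drive the Lyapunov inequality). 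A secondary subtlety is that $\kappa$ need not be invertible, so the $\alpha$-limits appearing in the Morse ordering must be extracted from negative orbits chosen inside the invariant global attractor $\Gamma$, which is exactly why the Morse decomposition is formulated on $\Gamma_0=\Gamma\cap\S_0$.
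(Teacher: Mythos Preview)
Your approach is correct and reaches the goal, but it takes a longer route than the paper. The paper also reduces to the Hofbauer--So topological criterion for the kick-to-kick map, but it separates the two required verifications cleanly: (i) $W^s(M_m)\subset\S_0$, and (ii) each $M_m$ is isolated in all of $\S$, not just in $\Gamma_0$. For (i) the paper proves a short lemma (Lemma~\ref{lem:key:one}): for any $z\in\S_+$, the empirical measures along $\{\kappa^n(z)\}$ have a weak-$*$ limit point $\mu$ that is invariant, supported on $\omega(z)$, and satisfies $r_i(\mu)\le 0$ for \emph{every} $i\in[k]$, because each coordinate $\pi_i(\kappa^n(z))$ is bounded above by dissipativity and hence $\tfrac{1}{n}\ln\pi_i(\kappa^n(z))\to 0$. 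If $\omega(z)\subset M_m$ this instantly contradicts \eqref{eq:thm:main}. For (ii), a companion lemma runs the same empirical-measure construction along orbits inside a shrinking sequence of invariant sets $K_\ell\supset M_m$ with $K_\ell\cap\S_+\neq\emptyset$, again producing $\mu$ on $M_m$ with all $r_i(\mu)\le 0$. The upshot is that your minimax step and Lyapunov function $P_m$ are an avoidable detour: since boundedness of each $\pi_i$ separately already forces $r_i(\mu)\le 0$ for every $i$, one contradicts $\max_i r_i(\mu)>0$ directly, with no need to select weights $p_i^m$. What your packaging buys is that the single ``external instability'' argument handles (i) and (ii) in one stroke; what the paper's route buys is that each lemma is a few lines, avoids Sion's theorem, and sidesteps the diagonal ``$V$ shrinks to $M_m$'' issue you rightly flag as the main technical obstacle in your version.
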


Biologically, the condition \eqref{eq:thm:main} means that for each invariant set $M_m$ in the extinction set, there is at least one missing species that can invade. This condition implies that each Morse component $M_m$ is repelling to trajectories with all species initially present. By the minimax theorem, condition \eqref{eq:thm:main} is equivalent to the following condition: there exist positive weights $p_1,\dots,p_k$ (possibly depending on $m$) such that 
\begin{equation}\label{eq:hofbauer:condition}
\sum_{i=1}^k p_i r_i(\mu)>0 \mbox{ for all \emph{ergodic} measures $\mu$ with $\mu(M_m)=1$}
\end{equation}
This equivalent condition is sometimes known as the Hofbauer condition, as it was first introduced by \citet{hofbauer-1981-hypercycles} for permanence of ODEs in the special case of the trivial Morse decomposition (i.e., $\ell =1$ and $M_1=\Gamma_0$). The equivalence of conditions \eqref{eq:thm:main} and \eqref{eq:hofbauer:condition} was first observed by \citet{garay-hofbauer-2003-permanence}.

The proof of Theorem~\ref{thm:main} is based on a topological characterization of permanence due to \citet{hofbauer-so-1989-persistence} and the following lemma. 

\begin{lemma}\label{lem:key:one} For any $z\in \S_+$, there exists an invariant probability measure $\mu$ such that $\mu(\omega(z))=1$ and $r_i(\mu)\le 0$ for all $i\in \{1,2,\dots,k\}$. For any negative orbit $\mathcal{O}^-\subset \Lambda\cap \S_+$, there exists an invariant probability measure $\mu$ such that $\mu(\alpha(\mathcal{O}^-))=1$ and $r_i(\mu)\ge 0$ for all $i\in \{1,2,\dots,k\}$. 
\end{lemma}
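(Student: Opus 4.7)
The plan is to build invariant measures by Krylov--Bogolyubov-type empirical averages along the forward (respectively backward) orbit and then combine the telescoping identity
\[
\sum_{m=0}^{n-1} r_i(\kappa^m(z)) \;=\; \ln \pi_i(\kappa^n(z)) - \ln \pi_i(z) \qquad (\pi_i(z)>0),
\]
which is immediate from $r_i(z)=\ln\pi_i(\kappa(z))-\ln\pi_i(z)$ recorded just before \eqref{eq:per-capita_growth_rate_at_mu}, with the dissipativeness bound of \textbf{A5} to pin down the sign of $r_i(\mu)$. Continuity of the kick-to-kick map $\kappa$ (hence of $r_i$) on the relevant compact sets follows from \textbf{A1}--\textbf{A4} and continuous dependence of the flow on initial data.

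For the forward part, fix $z\in \S_+$ and set $\mu_n = \tfrac{1}{n}\sum_{m=0}^{n-1}\delta_{\kappa^m(z)}$. By \textbf{A5} the forward orbit has compact closure, so $\{\mu_n\}$ is tight; pass to a weak-$*$ limit $\mu_{n_k}\to\mu$. Invariance follows from the standard computation $\int h\circ\kappa\,d\mu_n - \int h\,d\mu_n = \tfrac{1}{n}(h(\kappa^n(z))-h(z)) \to 0$ for bounded continuous $h$. To see $\mu(\omega(z))=1$, observe that since the orbit has compact closure, $d(\kappa^m(z),\omega(z))\to 0$; thus for any $\varepsilon>0$ the closed $\varepsilon$-neighborhood $\overline{V_\varepsilon}$ of $\omega(z)$ eventually contains the orbit, $\mu_n(\overline{V_\varepsilon})\to 1$, and Portmanteau on closed sets gives $\mu(\overline{V_\varepsilon})\ge 1$. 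Taking $\varepsilon\downarrow 0$ and using $\omega(z)=\bigcap_\varepsilon\overline{V_\varepsilon}$ yields $\mu(\omega(z))=1$. The telescoping identity now gives $\int r_i\,d\mu_n = \tfrac{1}{n}(\ln\pi_i(\kappa^n(z))-\ln\pi_i(z))$; since $\pi_i(\kappa^n(z))$ is bounded above by \textbf{A5} and $\pi_i(z)>0$, one has $\limsup_n \int r_i\,d\mu_n \le 0$, and continuity/boundedness of $r_i$ on the compact orbit closure yields $r_i(\mu)=\lim_k\int r_i\,d\mu_{n_k}\le 0$.

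The backward part is strictly parallel. For a negative orbit $\{z(-n)\}\subset \Gamma\cap\S_+$ inside the compact global attractor, set $\nu_n=\tfrac{1}{n}\sum_{m=0}^{n-1}\delta_{z(-m)}$, pass to a weak-$*$ limit $\nu$, and check invariance via $\int h\circ\kappa\,d\nu_n-\int h\,d\nu_n = \tfrac{1}{n}(h(z(1))-h(z(-(n-1))))\to 0$, using $\kappa(z(-k))=z(-k+1)$ for $k\ge 1$ and $\kappa(z(0))=z(1)$. The neighborhood argument now gives $\nu(\alpha(\mathcal O^-))=1$. Telescoping backwards produces
\[
\int r_i\,d\nu_n \;=\; \frac{1}{n}\bigl(\ln\pi_i(z(1))-\ln\pi_i(z(-(n-1)))\bigr);
\]
here $\pi_i(z(1))>0$ is fixed (by \textbf{A3} applied to $z(0)\in\S_+$) and $\pi_i(z(-(n-1)))\le K$ since $\mathcal O^-\subset\Gamma$, so $\liminf_n \int r_i\,d\nu_n \ge 0$ and $r_i(\nu)\ge 0$ follows.

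The main obstacle is only organizational: making sure that (i) $r_i$ is continuous on the compact sets where the empirical measures live (which reduces to continuous dependence of the flow on initial conditions plus continuity of $\tau$, $F_i$, $G$, and that $\ln F_i$ is bounded on compact subsets of $\T$ by positivity in \textbf{A3}), and (ii) the empirical measures concentrate on $\omega(z)$ or $\alpha(\mathcal O^-)$. The conceptual content is the asymmetry in the bounds on $\pi_i$ along the orbit: dissipativeness supplies a uniform upper bound but no lower bound along a forward orbit with $x_i$ possibly decaying, and no upper bound (only the given positive value) on a backward orbit with $x_i$ possibly blowing up in backward time. This asymmetry is exactly what produces the $\le 0$ and $\ge 0$ inequalities, in contrast to the equality $r_i(\mu)=0$ of Lemma~\ref{lemma:zero}, where both bounds were available.
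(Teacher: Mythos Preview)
Your proof is correct and follows essentially the same approach as the paper: Krylov--Bogolyubov empirical measures along the forward (resp.\ backward) orbit, weak-$*$ compactness from dissipativeness, the telescoping identity $\sum_{m=0}^{n-1} r_i(\kappa^m(z))=\ln\pi_i(\kappa^n(z))-\ln\pi_i(z)$, and the one-sided bound on $\pi_i$ from \textbf{A5} to get the sign of $r_i(\mu)$. Your version is in fact slightly more explicit than the paper's in justifying $\mu(\omega(z))=1$ via Portmanteau and in flagging the continuity of $r_i$ needed to pass the integrals through the weak-$*$ limit, but the underlying argument is the same.
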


Lemma~\ref{lem:key:one} implies that for any initial state with all species present, the average per-capita growth rates associated with its $\omega$-limit set are non-positive and the per-capita growth rates associated with its $\alpha$-limit sets are non-negative. For continuous-time models, this lemma provides a means of simplifying the proof of Theorem~\ref{thm:main} for models without auxiliary variables~\citep{schreiber-2000-permanence} and providing a more direct approach to proving the invasion graph characterization of permanence~\citep{hofbauer-schreiber-2022-permanence}.

I now provide partial converses to Theorem~\ref{thm:main}, giving sufficient conditions for the existence of an extinction-bound attractor and an attractor on the extinction set. Both conditions require that the kick-to-kick map $\kappa$ has additional smoothness.

\begin{proposition}[Extinction bound trajectories]\label{prop:extinction} Assume the kick-to-kick map $\kappa$ is twice continuously differentiable, and assumptions \textbf{A1}--\textbf{A5} hold. If there exists an ergodic measure $\mu$ such that $\mu(\S_0)=1$ and $r_i(\mu)<0$ for all $i\notin I(\mu)$, then there exists $z\in \S_+$ such that $\omega(z)\subset \S_0$. In particular, the system is not permanent. 
 \end{proposition}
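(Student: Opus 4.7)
The plan is to apply Pesin's stable manifold theorem at a $\mu$-typical point on the extinction set to obtain a local stable manifold that meets $\S_+$ and on which iterates of $\kappa$ converge to the $\mu$-typical orbit. Since that orbit lies in the closed invariant set $\S_0$, every point on the stable manifold has its $\omega$-limit set contained in $\S_0$, producing the desired extinction-bound trajectory in $\S_+$.

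First I would analyze $D\kappa$ along $\S_{I(\mu)}$. Because
\[
\pi_i(\kappa(z)) = x_i\, F_i(z.\tau(z)) \exp\!\left(\int_0^{\tau(z)} f_i(z.t)\,dt\right),
\]
for every $z\in\S_{I(\mu)}$ and every $i\notin I(\mu)$, the coordinate direction $e_i$ is an eigendirection of $D\kappa(z)$ with eigenvalue $e^{r_i(z)}$, while the partial derivatives of $\pi_i\circ\kappa$ with respect to any other coordinate vanish at $x_i=0$. Hence the splitting $T_z\S = T_z\S_{I(\mu)}\oplus\bigoplus_{i\notin I(\mu)}\mathbb{R}e_i$ is $D\kappa$-invariant along $\supp(\mu)$. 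Applying Birkhoff's ergodic theorem to the observable $r_i$ then identifies the Lyapunov exponent of $D\kappa$ in the direction $e_i$ as $r_i(\mu)$, which is strictly negative for every $i\notin I(\mu)$ by hypothesis.

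Next I would invoke Pesin's stable manifold theorem for the $C^2$ map $\kappa$, which for $\mu$-almost every $z_0$ produces a local stable manifold $W^s_{\mathrm{loc}}(z_0)$ tangent at $z_0$ to the full stable subspace, and hence tangent to every transverse direction $e_i$, $i\notin I(\mu)$. Taking $z_0\in\S_{I(\mu)}$ generic (so that $\pi_j(z_0)>0$ for all $j\in I(\mu)$), continuity produces points $z^*\in W^s_{\mathrm{loc}}(z_0)$ arbitrarily close to $z_0$ with $\pi_i(z^*)>0$ for every $i\in[k]$, i.e., $z^*\in\S_+$. The stable manifold property gives $d(\kappa^n(z^*),\kappa^n(z_0))\to 0$, and the closedness of $\S_0$ together with $\{\kappa^n(z_0)\}\subset\S_0$ yields $\omega(z^*)\subseteq\S_0$. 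Since $\pi_i(\kappa^n(z^*))\to 0$ for at least one $i\notin I(\mu)$, permanence of $\kappa$, and hence of $\Phi$ by Lemma~\ref{lem:kick-to-kick-permanence-implies-flow-kick-permanence}, fails.

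The main obstacle is the rigorous invocation of Pesin theory, since $\kappa$ is not assumed invertible and $\supp(\mu)$ lies on the boundary of $\S$. One must use a formulation of the stable manifold theorem valid for non-invertible $C^{1+\alpha}$ maps and verify the log-integrability $\int\log^+\|D\kappa\|\,d\mu<\infty$, which follows from $C^1$ smoothness of $\kappa$ and compactness of $\supp(\mu)$. The boundary issue is resolved by the fact that every transverse direction $e_i$ points into the interior $\S_+$ and every transverse exponent is strictly negative, so the stable manifold locally parametrizes a full neighborhood of $z_0$ in the transverse coordinates and therefore meets $\S_+$ rather than clinging to $\S_0$.
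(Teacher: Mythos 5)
Your overall strategy is the right one: the paper itself gives no in-text proof of this proposition, and the surrounding discussion attributes results of this type to the transverse-stability/normal Lyapunov exponent literature it cites (Ashwin et al.\ 1994, Rand et al.\ 1994; cf.\ the earlier permanence papers of Schreiber and Hofbauer--Schreiber), which is exactly the Pesin stable-manifold argument you sketch. You also correctly flag non-invertibility, log-integrability, and the boundary location of the support of $\mu$ as the delicate points. So the route is appropriate rather than divergent; the problem is one specific linear-algebra step that, as written, is false and leaves a genuine gap in how you get the stable manifold to meet $\S_+$.

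The incorrect step is the claim that each $e_i$, $i\notin I(\mu)$, is an eigendirection of $D\kappa(z)$ and that $T_z\S_{I(\mu)}\oplus\bigoplus_{i\notin I(\mu)}\R e_i$ is a $D\kappa$-invariant splitting. From $\pi_i(\kappa(z))=x_i\Psi_i(z)$ you correctly get that, at $x_i=0$, the $i$-th \emph{row} of $D\kappa(z)$ vanishes off the diagonal and has diagonal entry $e^{r_i(z)}$; but the $i$-th \emph{column} does not reduce to $e^{r_i(z)}e_i$, because the derivatives of the resident species' updates $x_j\Psi_j$ ($j\in I(\mu)$, $x_j>0$) and of the auxiliary update with respect to $x_i$ need not vanish. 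Hence $e_i$ is generally not an eigenvector, and only the tangent subspace to the face is invariant; what you really have is a block-triangular derivative cocycle along the support of $\mu$ whose quotient (transverse) block is diagonal with entries $e^{r_i(\kappa^m z)}$, so Birkhoff gives the transverse (quotient) Lyapunov exponents $r_i(\mu)<0$. This matters for your last step: the Oseledets stable subspace $E^s(z_0)$ need not contain any $e_i$, so tangency of $W^s_{\mathrm{loc}}(z_0)$ ``to every transverse direction'' is not available, and you must instead show that $E^s(z_0)$ projects \emph{onto} the transverse coordinate space. This does follow, but needs an argument you currently skip: since the tangent subspace is cocycle-invariant, the Lyapunov spectrum of the full cocycle is the union of the spectrum of its restriction to the face and the spectrum of the quotient; as all $m$ quotient exponents are negative, $\dim E^s-\dim\bigl(E^s\cap T_{z_0}\S_{I(\mu)}\bigr)=m$, so the projection of $E^s$ onto the transverse coordinates is surjective. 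Choosing $v\in E^s$ whose transverse components are all positive then yields points of $W^s_{\mathrm{loc}}(z_0)$ in $\S_+$ converging to the orbit of $z_0\subset\S_0$, and the rest of your argument goes through. Finally, note that invoking Pesin theory requires $\kappa$ to be $C^{1+\alpha}$ on a neighborhood of the support of $\mu$ in a manifold, whereas $\S$ is only assumed closed (in the serial-transfer example it is a simplex crossed with a finite set); one should either extend $\kappa$ smoothly to a neighborhood or work within the smooth stratum containing the support, a point worth stating explicitly since the proposition's hypothesis ``$\kappa$ is twice continuously differentiable'' is doing exactly this work.
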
 
 
This proposition states that if there is a community of species in the extinction set (represented by an ergodic measure $\mu$) that cannot be invaded by any of the missing species (that is, all missing species have negative invasion growth rates), then there exists at least one trajectory starting with all species present that leads to the extinction of some species. This trajectory serves as a counterexample to permanence. The second condition follows from \citep[Theorem 2.12]{ashwin-etal-1994-attractor} and Assumption \textbf{A3}.

 \begin{proposition}[Extinction attractors]\label{prop:extinction:attractor}
     Assume that the kick-to-kick map $\kappa$ is twice continuously differentiable, and assumptions \textbf{A1}--\textbf{A5} hold. If 
     \begin{enumerate}
     \item there is a subset of species $I\subset \{1,2,\dots,k\}$ and an attractor $A\subset \S_I$ for the kick-to-kick $\kappa$ dynamics restricted to $\S_I$, and
     \item $r_i(\mu)<0$ for all $i\notin I$ and for all ergodic measures $\mu$ with $\mu(A)=1$
     \end{enumerate} then $A$ is an attractor for the kick-to-kick map $\kappa:\S\to\S$. 
 \end{proposition}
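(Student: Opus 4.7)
The plan is to verify the hypotheses of \citep[Theorem 2.12]{ashwin-etal-1994-attractor} by identifying the invasion growth rates $r_i(\mu)$ for $i \notin I$ with the normal Lyapunov exponents of $A$, viewed as an attractor inside the $\kappa$-invariant stratum $\S_I$. Forward invariance of $\S_I$ follows at once: the per-capita form of \eqref{eq:flow} preserves each hyperplane $\{x_i=0\}$, and Assumption \textbf{A3} ($F_i>0$) means the kick also preserves the zero/nonzero status of every species. Hence $A\subset\S_I$ is a compact invariant set for $\kappa$ acting on all of $\S$.

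The main technical step is to compute $D\kappa(z)$ at $z\in A$ and extract its normal part. Splitting coordinates into those tangent to $\S_I$ (species in $I$ together with the auxiliary variables $y$) and those normal to $\S_I$ (species in $[k]\setminus I$), I would show that $D\kappa(z)$ is block-triangular with a diagonal normal block. For each $i \notin I$, write $\kappa_i(z) = (z.\tau(z))_i\, F_i(z.\tau(z))$. Since $x_i=0$ at $z$, every term carrying $(z.\tau(z))_i$ as a factor vanishes and the derivatives of $\tau$ and of $F_i$ drop out; integrating the variational equation for the $i$-th coordinate along the flow then yields
\begin{equation*}
\frac{\partial \kappa_i}{\partial x_i}(z) \;=\; \exp\Big(\int_0^{\tau(z)} f_i(z.s)\,ds\Big)\, F_i(z.\tau(z)) \;=\; e^{r_i(z)},
\end{equation*}
while $\partial \kappa_i/\partial z_j(z)=0$ for $j\ne i$, because a perturbation that keeps $x_i=0$ leaves $(z.\tau(z))_i$ identically zero along the perturbed flow. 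Thus the normal block of $D\kappa(z)$ is the diagonal cocycle $\mathrm{diag}(e^{r_i(z)})_{i\notin I}$.

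For any ergodic measure $\mu$ supported on $A$, Birkhoff's theorem applied to the continuous scalar cocycle $r_i$ identifies the $i$-th normal Lyapunov exponent with $\int r_i(z)\,\mu(dz) = r_i(\mu)$, which is strictly negative by hypothesis~(2). Theorem~2.12 of \citet{ashwin-etal-1994-attractor} then promotes $A$ from an attractor of $\kappa|_{\S_I}$ to an asymptotically stable attractor of $\kappa$ on $\S$.

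The hard part will be matching our setting to the Ashwin et al.\ framework, which is stated for smooth maps on a manifold with a smooth invariant submanifold, whereas $\S$ is only a closed subset of $[0,\infty)^k \times \R^\ell$, $\S_I$ is a face of the nonnegative cone, and the timing function $\tau$ is merely continuous. The explicit diagonal form of the normal block sidesteps both difficulties: the transverse cocycle is a product of positive scalars whose logarithms are the continuous functions $r_i$, so no regularity of $\tau$ enters in the normal directions, and the $C^2$ hypothesis on $\kappa$ together with \textbf{A1}, \textbf{A3}, \textbf{A4} supplies the smoothness needed on a neighborhood of $A$ in the ambient space $\mathcal{K}$ to invoke the usual normal-hyperbolicity machinery.
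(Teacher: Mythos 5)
Your proposal is correct and follows essentially the same route as the paper, which proves this proposition simply by citing \citet[Theorem 2.12]{ashwin-etal-1994-attractor} together with Assumption \textbf{A3}; your computation of the diagonal normal block $\mathrm{diag}(e^{r_i(z)})_{i\notin I}$ of $D\kappa$ and the Birkhoff identification of the normal Lyapunov exponents with $r_i(\mu)$ is exactly the verification that citation presupposes. No substantive difference in approach, so nothing further is needed.
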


This proposition provides a systematic way to identify extinction attractors by: (1) verifying the permanence conditions of Theorem~\ref{thm:main} for a subset $I$ of species, and (2) checking that no missing species can invade (the non-invasibility condition $\max_{i\notin I}r_i(\mu)<0$). When both conditions are met, the attractor $A$ persists in the full system.

{\begin{remark} Population models are always approximations of reality. As \citet{box-1979-robustness} observed, ``it would be very remarkable if any system existing in the real world could be exactly represented by any simple model,'', making it crucial to understand whether conclusions drawn from these models remain valid under perturbations of their structural assumptions. In the words of \citet{conley-1978-morse}, ``if such rough equations are to be useful, it is necessary to study them in rough terms.'' In line with this philosophy, \citet{hutson-schmitt-1992-permanence} introduced robust permanence, that is, that permanence holds even with sufficiently small perturbations to the vector field. Using a measure-theoretic approach, \citet{schreiber-2000-permanence} and \citet{roth-etal-2017-permanence} showed that the conditions in Theorem~\ref{thm:main} for permanence also imply robust permanence for purely continuous-time and purely discrete-time models, respectively. Alternatively, using average Lyapunov functions, \citet{garay-hofbauer-2003-permanence} and \citet{patel-schreiber-2018-permanence} provided similar results for continuous-time models, including those with auxiliary variables. Using either of these methods, the permanence criteria presented in Theorem~\ref{thm:main} should extend to robust permanence with respect to perturbations in both the flow and kick growth functions and auxiliary variables, as well as the functional form of the state-dependent kick time $\tau$. Similarly, the conditions in Proposition~\ref{prop:extinction:attractor} for the extinction attractors are also robust to sufficiently small structural perturbations of the governing equations. 
\end{remark}}
\subsection{Permanence and Extinction via Invasion Graphs} 

To help construct the Morse decompositions needed by Theorem~\ref{thm:main}, \citet{hofbauer-schreiber-2022-permanence} introduced invasion graphs as a method to identify Morse decompositions for ODE models. Invasion graphs characterize all possible transitions between communities due to single- or multiple-species invasions. Here, I extend the invasion graph approach to flow-kick systems and show how it can be used to characterize permanence. To define the invasion graph, I need the following additional assumptions:
\begin{description}
   \item[A6] For each ergodic invariant Borel probability measure $\mu$ supported by $\S_0$, $r_j(\mu)\neq 0$ for all $j\notin I(\mu)$, and 
   \item[A7] $\sgn r_j(\mu) = \sgn r_j(\nu)$ for any two ergodic measures $\mu, \nu$ with $I(\mu) = I(\nu)$ and all $j$.
\end{description}

Assumption \textbf{A6} requires that invasion growth rates $r_i(\mu)$ be non-zero for species not supported by $\mu$. This non-degeneracy condition ensures that each species either can or cannot invade a given community. This assumption holds typically for dissipative Lotka-Volterra systems or systems with a finite number of ergodic measures. Assumption \textbf{A7} ensures that all ergodic measures with the same species support agree on which missing species can invade, allowing us to define invasion growth rates for communities rather than for specific ergodic measures. This assumption automatically holds when each face supports at most one invariant probability measure (e.g., there is a unique equilibrium, periodic orbit, or quasi-periodic motion in a given face). Due to their time-averaging property, assumption \textbf{A7} also holds for the periodically disturbed Lotka-Volterra systems introduced in Section~\ref{sec:LV}.

However, \textbf{A7} can fail in some important situations. For example, \citet{mcgehee-armstrong-1977-exclusion} demonstrated coexistence in the sense of a positive attractor but not permanence for a three-species system consisting of two predators and one prey species. In their example, a predator-prey pair coexists along a stable limit cycle at which the other predator can invade ($r_i(\mu)>0$ for the ergodic measure $\mu$ supported by the limit cycle), but the predator-prey pair also supports an unstable equilibrium at which the other predator cannot invade ($r_i(\nu)<0$ for the Dirac measure $\nu$ supported by this equilibrium).

\newcommand{\Com}{\mathcal{C}}
Given assumption \textbf{A7}, I can uniquely define:
\[
r_i(I)   = \sgn r_i(\mu) \mbox{ whenever } I(\mu)=I.
\]
Let \emph{$\Com$ be the set of all communities}: all subsets $I$ of $\{1,2,\dots,k\}$ such that $I = I(\mu)$ for some ergodic measure $\mu$. Let $|\Com|$ be the number of elements in $\Com$. Note that $\Com$ may not include all possible subsets of $\{1,2,\dots,k\}$, but only those that can actually be achieved as the support of some ergodic measure. Following \citet{hofbauer-schreiber-2022-permanence}, \emph{the invasion scheme $\IS$} is the table of signs of invasion growth rates $\{(r_i(I))_{ i\in \{1,2,\dots,k\}}: I\in \Com\}$. This scheme can be represented as a matrix $|\Com| \times k$ where the rows correspond to communities and the columns correspond to species, with entries that indicate whether each species can invade ($+1$), cannot invade ($-1$), or is already present ($0$) in each community. 

Invasion growth rates are useful for identifying possible limit sets $\alpha$ and $\omega$ for an orbit of kick-to-kick dynamics. The following crucial lemma is proved in Section~\ref{proof:IG}. Roughly, the lemma states that if a trajectory starting with a set of species converges to a set supporting only a subset of these species, then the missing species must have negative invasion growth rates in this limit. Conversely, if a negative orbit starting with a set of species converges in backward time to a subset of these species, then the missing species must have positive invasion growth rates in the limit into the deep past. 

\begin{lemma}\label{lemma:critical} Assume \textbf{A1}--\textbf{A7} hold. If $z$ is an initial condition in $\S_I$ for some $I\subset \{1,2,\dots,k\}$ and $\omega(z)$ for the kick-to-kick map $\kappa$ lies in $\S_J$ for some $J\subset I$, then $J\in \Com$ and  $r_j(J)<0$ for all $j\in I\setminus J$. 
Alternatively, if $\mathcal{O}^-=\{z(n)\}_{n=-\infty}^0\subset \Gamma$ is a negative orbit where $z(0)\in \S_I$ for some $I\subset \{1,2,\dots,k\}$ and $\alpha(\mathcal{O}^-)\subset\S_J$ for some $J\subset I$, then $J\in \Com$ and $r_j(J)>0$ for all $j\in I\setminus J$.
\end{lemma}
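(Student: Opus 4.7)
The plan is to derive both halves of the lemma in parallel by combining Lemma~\ref{lem:key:one} with the ergodic decomposition theorem, and then reading off signs of the invasion growth rates using assumptions \textbf{A6} and \textbf{A7}.

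For the forward part I start with $z\in\S_I$ and $\omega(z)\subset\S_J$, and apply (a mild adaptation of) the first half of Lemma~\ref{lem:key:one} to obtain an invariant probability measure $\mu$ with $\mu(\omega(z))=1$ and $r_j(\mu)\le 0$ for every $j\in I$. The key observation is that the family $\{\S_{J'}:J'\subseteq[k]\}$ partitions $\S$ into countably many Borel invariant subsets (invariance of each $\S_{J'}$ under $\kappa$ follows from positivity of the flow together with \textbf{A3}), so every ergodic measure on $\S$ is carried by a unique $\S_{J'}$ and has a well-defined species support. Writing $\mu=\int\nu\, dm(\nu)$ in ergodic decomposition, the identity $\int\nu(\S_J)\, dm(\nu)=\mu(\S_J)=1$ forces $\nu(\S_J)=1$ and hence $I(\nu)=J$ for $m$-almost every $\nu$. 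In particular such ergodic measures exist, which establishes $J\in\Com$. The backward statement proceeds identically: by \textbf{A3} and positivity of the flow the sequence $\mathcal{O}^-$ is forced to stay in $\S_I$, so the second half of Lemma~\ref{lem:key:one} provides an invariant $\mu$ with $\mu(\alpha(\mathcal{O}^-))=1$ and $r_j(\mu)\ge 0$ for $j\in I$, and the same ergodic-decomposition step again yields $J\in\Com$.

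To pin down the sign of $r_j(J)$ for $j\in I\setminus J$, I first dispose of the trivial case $J=I$. Otherwise $J\subsetneq[k]$, so $\S_J\subset\S_0$ and assumption \textbf{A6} guarantees $r_j(\nu)\neq 0$ for every ergodic component $\nu$ with $I(\nu)=J$, while assumption \textbf{A7} makes $\sgn r_j(\nu)=\sgn r_j(J)$ constant across $m$-almost every such $\nu$. Integrating,
\[
r_j(\mu)=\sgn r_j(J)\int|r_j(\nu)|\, dm(\nu),
\]
and the right-hand integral is strictly positive. Hence in the forward case the hypothesis $r_j(\mu)\le 0$ forces $\sgn r_j(J)<0$, while in the backward case $r_j(\mu)\ge 0$ forces $\sgn r_j(J)>0$, completing the argument.

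The principal obstacle I anticipate is not the ergodic-decomposition bookkeeping, which is routine, but the adaptation of Lemma~\ref{lem:key:one} from $\S_+$ to $\S_I$: as stated, that lemma presumes every species has positive density, whereas here only the species in $I\subsetneq[k]$ need be positive. The fix should be painless, because inspecting its proof reveals that positivity of $\pi_i(z)$ is used only for the species $i$ whose growth rate is being bounded, and here we only need the bound for $i\in I$. A clean way to package this reduction is to treat the forward- and backward-invariant face $\overline{\S_I}$ as an autonomous dissipative flow-kick subsystem with species set $I$ satisfying \textbf{A1}--\textbf{A5}, and then invoke Lemma~\ref{lem:key:one} verbatim in that subsystem.
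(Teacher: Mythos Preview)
Your proof is correct and follows essentially the same route as the paper: invoke Lemma~\ref{lem:key:one} (restricted to the face $\overline{\S_I}$) to obtain an invariant measure with the right sign inequalities, then use \textbf{A6}--\textbf{A7} to convert those inequalities into the strict sign of $r_j(J)$. Your version is in fact more careful than the paper's terse argument: you spell out the ergodic-decomposition step needed to pass from the invariant measure $\mu$ to the ergodic components with species support $J$ (which simultaneously yields $J\in\Com$ and makes the appeal to \textbf{A6} legitimate), and you correctly flag that Lemma~\ref{lem:key:one} as stated is for $\S_+$ and must be applied to the subsystem on $\overline{\S_I}$.
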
 

Based on this lemma, signs of invasion growth rates can identify possible transitions between communities. The \emph{invasion graph} $\IG$ is the directed graph whose vertices are the communities in $\C$ and for which there is a directed edge from community $I$ to community $J$ if:
\begin{itemize}
    \item $I \neq J$ (communities are distinct),
   \item $r_j(I)>0$ for all $j \in J\setminus I$ (all species in $J$ but not in $I$ can invade community $I$), and 
   \item $r_i(J)<0$ for all  $i \in I \setminus J$ (all species in $I$ but not in $J$ cannot reinvade  community $J$).
\end{itemize}
This directed edge represents a possible transition from community $I$ to community $J$ through ecological invasions.

Importantly, Lemma~\ref{lemma:critical} implies that if a trajectory has its $\alpha$-limit set in $\S_I$ for a proper subset $I\subset \{1,2,\dots,k\}$ and its $\omega$-limit set in $\S_J$ for another proper set $J\neq I$, then there is a directed edge from $I$ to $J$ on the invasion graph. The opposite need not be true: a directed edge from $I$ to $J$ in the invasion graph does not guarantee the existence of an actual trajectory whose $\omega$-limit set lies in $\S_J$ and whose $\alpha$-limit set lies in $\S_I$.

When the invasion graph is acyclic (i.e., contains no directed cycles), the following theorem characterizes permanence. It requires that every community $I\in \C$ with some missing species has at least one missing species with a positive invasion growth rate, that is, $\max_{i\notin I} r_i(I)>0$. This theorem generalizes \citet[Theorem 1]{hofbauer-schreiber-2022-permanence}, which corresponds to the special case of ordinary differential equations.

\begin{theorem}\label{thm:main2}
Assume that \textbf{A1}--\textbf{A7} hold and that $\IG$ is acyclic. If $\max_{i\notin I}r_i(I)>0$ for each $I \in \Com \setminus \{1,2,\dots,k\}$, then \eqref{eq:flow-kick} is permanent.
\end{theorem}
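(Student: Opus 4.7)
The plan is to derive Theorem~\ref{thm:main2} from Theorem~\ref{thm:main} by using the acyclic invasion graph $\IG$ to construct an explicit Morse decomposition of $\Gamma_0$ whose Morse sets correspond to the communities in $\Com \setminus \{[k]\}$. Since $\IG$ is acyclic on the finite vertex set $\Com$, I would topologically sort the proper subcommunities as $\Com \setminus \{[k]\} = \{I_1, \ldots, I_N\}$ so that every directed edge of $\IG$ between distinct proper communities goes from $I_i$ to $I_j$ with $i < j$. For each $m$, the candidate Morse set $M_m \subset \Gamma_0$ would be the maximal compact invariant subset of $\Gamma \cap \bar{\S}_{I_m}$ whose ergodic invariant measures all have species support exactly $I_m$, where $\bar{\S}_{I_m} = \{z : x_i = 0 \text{ for } i \notin I_m\}$ is the closed face left invariant by the per-capita structure (\textbf{A1}, \textbf{A3}).

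Next, I would verify that $\mathcal{M} = \{M_1, \ldots, M_N\}$ is a Morse decomposition. Pairwise disjointness follows because distinct $I_m$ correspond to distinct species supports, and every ergodic measure is uniquely associated with its support $I(\mu)$. Compactness and invariance follow from the construction and the fact that each $\bar{\S}_{I_m}$ is closed and invariant under $\kappa$. Isolation uses \textbf{A6} and \textbf{A7}: any larger nearby invariant set would have to carry an ergodic measure whose sign pattern $r_j(I)$ is non-degenerate, so the maximal invariant set in a sufficiently small neighborhood coincides with $M_m$. To verify the chain condition, given $z \in \Gamma_0 \setminus \bigcup_m M_m$, I would apply Lemma~\ref{lem:key:one} to produce invariant measures supported on $\omega(z)$ and on any negative-orbit $\alpha$-limit $\alpha(\mathcal{O}^-)$ in $\Gamma_0$, decompose each into its ergodic components, identify their supports as $I_j$ and $I_i$ (well-defined by \textbf{A7}), and then use Lemma~\ref{lemma:critical} to show that the transition from $I_i$ to $I_j$ realizes a directed edge of $\IG$. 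Acyclicity and the topological sort then force $i < j$.

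Finally, I would check condition \eqref{eq:thm:main}. For any invariant probability measure $\mu$ on $M_m$, the ergodic decomposition writes $\mu$ as an average of ergodic measures, each with species support $I_m$ by construction; by \textbf{A7}, $\sgn r_i$ is constant across these components and equals $r_i(I_m)$, so $r_i(\mu)$ inherits the sign of $r_i(I_m)$ for each $i \notin I_m$. The hypothesis $\max_{i \notin I_m} r_i(I_m) > 0$ together with $I_m \subsetneq [k]$ then yields $\max_i r_i(\mu) > 0$, as required. Theorem~\ref{thm:main} delivers permanence of the kick-to-kick map, and Lemma~\ref{lem:kick-to-kick-permanence-implies-flow-kick-permanence} transfers this to the full flow-kick dynamics $\Phi$. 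The main obstacle I anticipate is the chain-condition verification: translating the pointwise statement of Lemma~\ref{lemma:critical} into a strict global ordering of Morse sets requires ruling out the possibility that $\omega(z)$ or $\alpha(\mathcal{O}^-)$ straddles multiple faces without corresponding to a single $I_m$, which is precisely what \textbf{A6}--\textbf{A7} combined with an ergodic-decomposition argument must handle. Isolation of each $M_m$ is the other delicate point, as it requires excluding accumulation of ergodic measures with species support strictly smaller than $I_m$ in every neighborhood.
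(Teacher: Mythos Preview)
Your approach is essentially the paper's: build a Morse decomposition of $\Gamma_0$ indexed by the proper communities in $\Com$, order it via the acyclic invasion graph, use Lemma~\ref{lemma:critical} for the chain condition, and invoke Theorem~\ref{thm:main}. The paper packages the construction inductively (Lemma~\ref{lemma:acyclic:build:morse:demposition}, building $M_I$ for $|I|=0,1,\dots,k-1$), which is exactly what resolves the two points you flag as delicate: the inductive hypothesis that the lower-dimensional $M_J$ are already isolated is what guarantees the existence of the maximal compact invariant set $M_I\subset \S_I$, and isolation of each $M_I$ in $\S$ is obtained not by a direct sign-pattern argument but via Lemma~\ref{lem:key:two}, which produces invariant measures $\mu_\pm$ on $M_I$ with $r_j(\mu_+)\ge 0$ and $r_j(\mu_-)\le 0$ for all $j$, contradicting \textbf{A6}--\textbf{A7}.
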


In the development of the theory of coexistence based on invasion growth rates for competitive communities, \citet{chesson-1994-competition} introduced the concept of $-i$ communities: communities that arise when only species $i$ is missing. More precisely, a community $I \subset \{1,2,\dots,k\}$ is called an \emph{$-i$ community} if: (1) $i \notin I$, and (2) $r_j(I)<0$ for all $j\in \{1,2,\dots,k\}\setminus (I\cup\{i\})$. Biologically, an $-i$ community is a subcommunity that cannot be invaded by any species except possibly species $i$. By Lemma~\ref{lemma:critical}, if $z=(x,y)\in \S$ is an initial condition with only species $i$ absent (i.e., $x_i=0$ and $x_j>0$ for $j\neq i$) and $\omega(z)\subset \S_I$ for some $I\subset\{1,2,\dots,k\}$, then $I$ must be an $-i$ community. Theorem~\ref{thm:main2} implies the following corollary:

\begin{corollary} Assume \textbf{A1}--\textbf{A7} hold and $\IG$ is acyclic. If $r_i(I)>0$ for all $-i$ communities with $i\in\{1,2,\dots,k\}$, then \eqref{eq:flow-kick} is permanent. 
\end{corollary}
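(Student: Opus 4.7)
The plan is to deduce this corollary directly from Theorem~\ref{thm:main2} by showing that the hypothesis on $-i$ communities implies the more general condition $\max_{i\notin I} r_i(I) > 0$ for every $I \in \Com \setminus \{[k]\}$. Once this reduction is in place, permanence follows immediately from Theorem~\ref{thm:main2}, since \textbf{A1}--\textbf{A7} and the acyclicity of $\IG$ are already assumed.

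To carry out the reduction, I would fix an arbitrary community $I \in \Com$ with $I \neq [k]$ and argue by contradiction. Suppose $r_i(I) \leq 0$ for every $i \notin I$. Assumption \textbf{A6}, together with \textbf{A7} (which is what legitimizes the community-level sign $r_i(I)$), rules out the value zero, so in fact $r_i(I) < 0$ for every $i \notin I$. Now pick any $i \notin I$, which is possible since $I \neq [k]$. The defining condition (2) of being an $-i$ community, namely $r_j(I) < 0$ for all $j \in [k] \setminus (I \cup \{i\})$, is then automatically satisfied, because every such $j$ is a missing species distinct from $i$ and hence has $r_j(I) < 0$ by the standing assumption. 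Condition (1), $i \notin I$, holds by choice of $i$. Therefore $I$ is an $-i$ community, and the hypothesis of the corollary forces $r_i(I) > 0$, contradicting $r_i(I) < 0$.

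This contradiction establishes $\max_{i \notin I} r_i(I) > 0$ for every $I \in \Com \setminus \{[k]\}$, so Theorem~\ref{thm:main2} applies and \eqref{eq:flow-kick} is permanent. I do not anticipate a real obstacle here; the argument is essentially bookkeeping on the definition. The one subtlety worth flagging is that when $I$ is missing only a single species $i$, condition (2) of being an $-i$ community is vacuous, so $I$ is automatically an $-i$ community and the corollary's hypothesis applies without any extra work; the contradiction argument above handles this edge case uniformly with the general one.
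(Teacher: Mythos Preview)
Your proposal is correct and matches the paper's intended route: the paper simply states that the corollary follows from Theorem~\ref{thm:main2} without spelling out the reduction, and your contradiction argument is precisely the bookkeeping needed to show that the $-i$ community hypothesis forces $\max_{i\notin I} r_i(I)>0$ for every $I\in\Com\setminus\{[k]\}$. Your handling of the edge case where $I$ is missing a single species and your invocation of \textbf{A6}--\textbf{A7} to exclude zero invasion growth rates are both appropriate.
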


When the invasion graph is acyclic, Proposition~\ref{prop:extinction} implies that the permanence condition in Theorem~\ref{thm:main2} is sharp. That is, if there is a community $I$ with some missing species and all the missing species have negative per-capita growth rates (i.e., $\max_{i\not\in I}r_i(I)<0$), then the system has an extinction-bound trajectory and is not permanent. Invasion graphs also provide a way to identify extinction attractors:

\begin{corollary}
    [Extinction attractors via Invasion Graphs]\label{cor:extinction} Assume \textbf{A1}--\textbf{A7} hold and the kick-to-kick map is twice continuously differentiable. Let $I\in \Com$ be such that:
\begin{enumerate}
    \item The invasion graph restricted to $I$ is acyclic and satisfies $\max_{j\in I\setminus J} r_j(J)>0$ for any $J\subsetneq I$ with $J\in \Com$, and 
    \item $r_i(I)<0$ for all $i\notin I$,
\end{enumerate}
then there is an attractor $A$ for the kick-to-kick map such that $A\subset \S_I$ and there exists a positive continuous function $\eta:\S_I\to(0,\infty)$ such that $\omega(z)\subset A$ whenever $\pi_i(z)>0$ for $i\in I$ and $\pi_i(z)<\eta(z)$ for $i\notin I$.    
\end{corollary}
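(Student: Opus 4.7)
The plan is to obtain the attractor in two stages: first by restricting the kick-to-kick map to the face supporting only species in $I$ and applying Theorem~\ref{thm:main2} to that subsystem, and then by invoking Proposition~\ref{prop:extinction:attractor} to lift the resulting attractor to the full state space $\S$. Concretely, I would work on the closed forward-invariant set $\widehat \S := \bigcup_{J\subseteq I}\S_J$; forward invariance comes from $F_j>0$, which preserves the zeros of species outside $I$. Assumptions \textbf{A1}--\textbf{A5} restrict cleanly to $\widehat \S$ with the same dissipative bound, and \textbf{A6}--\textbf{A7} restrict because every ergodic measure of $\kappa|_{\widehat \S}$ is also an ergodic measure of $\kappa$ with species support in $I$. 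The communities of the restricted system are exactly $\{J\in\Com: J\subseteq I\}$, the restricted invasion graph is the induced subgraph of $\IG$ on these vertices, and the invasion growth rates $r_j(J)$ are unchanged. Hypothesis (1) of the corollary then supplies exactly the acyclicity and the positivity $\max_{j\in I\setminus J} r_j(J)>0$ needed to apply Theorem~\ref{thm:main2}, yielding permanence of $\kappa|_{\widehat \S}$ and a compact attractor $A\subset\S_I$ whose basin inside $\widehat \S$ is $\S_I$.

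Next, I would verify the hypotheses of Proposition~\ref{prop:extinction:attractor}. Its condition (1) is exactly what the previous step supplies. For condition (2), any ergodic measure $\mu$ with $\mu(A)=1$ must have species support $I(\mu)=I$: the inclusion $I(\mu)\subseteq I$ is immediate from $A\subset\S_I$, and the reverse inclusion follows because permanence of $\kappa|_{\widehat \S}$ forces $A$ to lie uniformly away from $\bigcup_{J\subsetneq I}\S_J$, so every coordinate $\pi_i$ with $i\in I$ stays positive on $A$. Assumption \textbf{A7} then equates $\sgn r_i(\mu)$ with the sign $r_i(I)$, which is strictly negative for $i\notin I$ by hypothesis (2). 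Proposition~\ref{prop:extinction:attractor} therefore promotes $A$ to an attractor for $\kappa:\S\to\S$.

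To finish, I would extract $\eta$ from the geometry of the basin $U:=\{z\in\S:\omega(z)\subset A\}$, which is open in $\S$ and contains $A$ by the previous step. Because $\S_I$ is forward invariant and attracted to $A$ under $\kappa|_{\widehat \S}$, we have $\S_I\subseteq U$, so a choice such as $\eta(z)=\tfrac{1}{2}\,\mathrm{dist}(z,\S\setminus U)$ in the ambient $L^1$ norm is continuous and strictly positive on $\S_I$; a standard openness argument then confirms that any state with $\pi_i>0$ for $i\in I$ and sufficiently small masses on the extinct coordinates relative to its projection onto $\S_I$ lies in $U$ and hence is attracted to $A$. The main obstacle is the argument in the middle paragraph: establishing that every ergodic measure supported on $A$ has full species support $I$, so that the community-level sign $r_i(I)$ carried by the invasion graph actually controls the individual $r_i(\mu)$ appearing in the hypothesis of Proposition~\ref{prop:extinction:attractor}. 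This is what bridges the invasion-graph assumption (2) of the corollary with the measure-theoretic assumption (2) of the proposition, and it relies essentially on permanence of the restricted subsystem to keep $A$ off the relative boundary of $\S_I$ in $\widehat \S$. The remaining verifications---descent of \textbf{A1}--\textbf{A7} to the restriction, and continuity of $\eta$---are routine.
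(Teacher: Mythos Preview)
Your proposal is correct and follows exactly the route the paper intends: the corollary is stated without a separate proof precisely because it is meant to be read as the composite of Theorem~\ref{thm:main2} (applied to the subsystem on $\widehat\S=\bigcup_{J\subseteq I}\S_J$) and Proposition~\ref{prop:extinction:attractor}. Your identification of the key bridging step---that permanence of the restricted map forces $A$ to sit uniformly inside $\S_I$, so every ergodic $\mu$ with $\mu(A)=1$ has $I(\mu)=I$ and hence $r_i(\mu)<0$ for $i\notin I$ via \textbf{A7}---is the right observation, and your extraction of $\eta$ from the openness of the basin is the standard finishing move.
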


The last statement has a clear ecological interpretation: whenever all species in the community $I$ are initially present, any sufficiently small introduction of species outside $I$ will fail to establish, and the system will return to the attractor $A$ where only species in $I$ persist. This provides a mathematical formalization of what ecologists call an ``invasion-resistant'' community.

\section{Applications}

To illustrate how to apply the theoretical results from Section~\ref{sec:main}, I revisit the three examples introduced in Section~\ref{sec:models} to identify the conditions of extinction and permanence. Each example demonstrates different aspects of the theory while raising open research questions and providing new biological insights.

\begin{figure}
\includegraphics[width=0.9\textwidth]{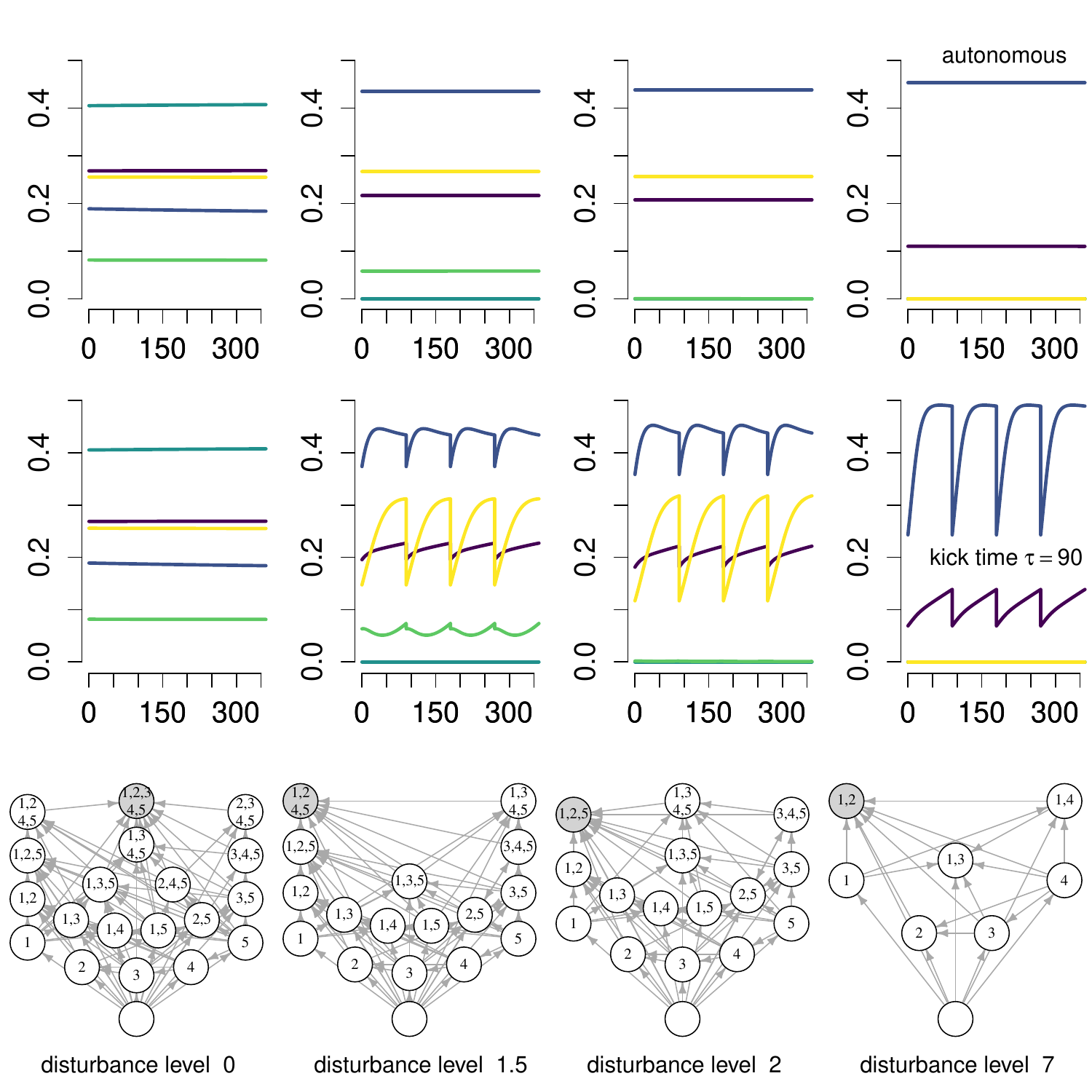}
\caption{The impact of disturbances on an empirically parameterized Lotka-Volterra model. In the top two rows, Lotka-Volterra dynamics of an autonomous model and its flow-kick counterpart are shown at three disturbance levels {$d_i=1-e^{-0.2m}$ for $i=1,2,3,4$ and $d_5=1-e^{-m}$ with $m=0,1.5,2,7$}. In the bottom row, the invasion graph of the autonomous system (which equals the invasion graph of its flow-kick counterpart) is displayed, with nodes representing different species assemblages and arrows indicating possible transitions between communities through invasion processes. Parameter values are as in Figure~\ref{fig:examples}A.}\label{fig:LV}
\end{figure}

\subsection{Invasion graphs for impulsive Lotka-Volterra systems} 
Consider the Lotka-Volterra models with impulsive disturbances introduced in Section~\ref{sec:LV}. For these models, Section~\ref{sec:LV:ri} showed that the invasion growth rate $r_i(\mu)$ is equal to $\left(\sum_j a_{ij} x_j^*(\mu)+b_i\right)\tau+\log(1-d_i)$ where, generically, the mean densities $x_i^*(\mu)$ correspond to the unique solution of a system of linear equations \eqref{eq:LV:x*}.  Due to this uniqueness, assumption \textbf{A6} holds generically. Small perturbations of the model parameters $a_{ij},b_i,d_i$ ensure that the average per-capita growth rates $r_i(\mu)$ are non-zero for missing species $i\notin I(\mu)$, satisfying assumption \textbf{A7}.

The solutions of the linear equations \eqref{eq:LV:x*} always correspond to non-negative equilibria of the autonomous Lotka-Volterra equations:
\begin{equation}\label{eq:LV:autonomous}
\frac{dx_i}{dt}=x_i\left(\sum_j \tau a_{ij}x_j + \tau b_i+ \ln(1-d_i) \right)
\end{equation}
where the $\ln(1-d_i)$ terms account for the average effects of the kicks on the long-term dynamics. Moreover, the invasion growth rates of the flow-kick system correspond to the invasion growth rates of this autonomous system. 

Due to this mathematical relationship, the invasion graph $\mathcal{G}_{\mbox{\tiny flow-kick}}$ for the flow-kick system is a subgraph of the invasion graph $\mathcal{G}_{\mbox{\tiny auto}}$ for the autonomous system \eqref{eq:LV:autonomous}. However, $\mathcal{G}_{\mbox{\tiny flow-kick}}$ may lack some vertices of $\mathcal{G}_{\mbox{\tiny auto}}$ and their corresponding incoming and outgoing edges. Whether this occurs is an open question with ecological implications: if the graphs differ, it would suggest that disturbances fundamentally alter which community states are possible, beyond merely changing their stability properties. If $\mathcal{G}_{\mbox{\tiny auto}}$ is acyclic and satisfies the invasion condition, then $\mathcal{G}_{\mbox{\tiny flow-kick}}$ is acyclic and satisfies the invasion condition, and consequently, the flow-kick system is permanent.

An important class of shared vertices between $\mathcal{G}_{\mbox{\tiny auto}}$ and $\mathcal{G}_{\mbox{\tiny flow-kick}}$ corresponds to subsystems of the autonomous model that meet the permanence conditions of Theorem~\ref{thm:main2}. Specifically, let $I\subset \{1,2,\dots,k\}$ correspond to a vertex of the autonomous invasion graph $\mathcal{G}_{\mbox{\tiny auto}}$ and define $\mathcal{G}_{\mbox{\tiny auto},I}$ to be the subgraph of $\mathcal{G}_{\mbox{\tiny auto}}$ with all vertices $J\subset \{1,2,\dots,k\}$ such that $J\subseteq I$ and all the corresponding edges. If $\mathcal{G}_{\mbox{\tiny auto},I}$ is acyclic and satisfies the invasion criterion of Theorem~\ref{thm:main2}, then the same holds for the subgraph $\mathcal{G}_{\mbox{\tiny flow-kick},I}$ of $\mathcal{G}_{\mbox{\tiny flow-kick}}$. Therefore, the subsystem of the flow-kick system corresponding to the species in $I$ is permanent. By the {\citet{brouwer-1911-uber}} fixed point theorem (which guarantees the existence of fixed points {for continuous maps of a convex set into itelf)}, there exists a fixed point $z^*\in \S_I$ on the kick-to-kick map and, consequently, an ergodic measure $\mu=\delta_{z^*}$ with $I(\mu)=I$. Therefore, $I$ is a vertex of $\mathcal{G}_{\mbox{\tiny flow-kick}}$.

To illustrate how to utilize these results, consider {\citet{geijzendorffer-etal-2011-sustained}'s empirically parameterized Lotka-Volterra model described in Section~\ref{sec:LV}.} In the absence of disturbances, the invasion graph for this community is acyclic and satisfies the permanence condition: all five species coexist.

Now consider disturbances that affect species $5$ {(\emph{Trifolium repens}) more strongly than the other species} (Figure~\ref{fig:LV}). For each disturbance level, I simulate the dynamics of the flow-kick system for $\tau={90}$ (middle row of Figure~\ref{fig:LV}) and the corresponding autonomous system (top row of Figure~\ref{fig:LV}). For the autonomous system, I also calculated its invasion graph $\mathcal{G}_{\mbox{\tiny auto}}$ at all disturbance levels (bottom row of Figure~\ref{fig:LV}) using the algorithm introduced in \citet{hofbauer-schreiber-2022-permanence}. For each disturbance level, these invasion graphs are acyclic, and the subgraph corresponding to every vertex is acyclic and satisfies the permanence condition. Hence, the invasion graph for the autonomous system is also the invasion graph for the flow-kick system: $\mathcal{G}_{\mbox{\tiny auto}}=\mathcal{G}_{\mbox{\tiny flow-kick}}$.

In the absence of disturbances (${m=0}$), the autonomous and flow-kick systems are identical and the entire system is permanent (first column of Figure~\ref{fig:LV}). Simulations suggest that coexistence occurs at a globally stable equilibrium. At intermediate disturbance levels (${m=1.5}$), the invasion graphs $\mathcal{G}_{\mbox{\tiny auto}}=\mathcal{G}_{\mbox{\tiny flow-kick}}$, in conjunction with Proposition~\ref{cor:extinction}, predict an attractor that excludes species {$3$} (second column of Figure~\ref{fig:LV}). At higher disturbances (${m=2}$), the invasion graphs predict an attractor excluding species {$3$} and {$4$} (third column of Figure~\ref{fig:LV}). At sufficiently high levels of disturbance-induced mortality rates (${m=7}$), the invasion graphs predict an attractor that excludes species {3},{4}, and {5}.

Interestingly, species $5$ is not the first to be excluded despite being the  species {most impacted} by disturbance-induced mortality. This unexpected result arises from the complex network of interactions: species 5 exerts positive direct and indirect effects on species {$3$} and {$4$}. These interaction effects buffer species 5 against disturbance while simultaneously making its competitors more vulnerable to extinction. This highlights how indirect effects in ecological networks can lead to counterintuitive outcomes in response to targeted perturbations.

\begin{figure}
    \centering
    \includegraphics[width=0.4\textwidth]{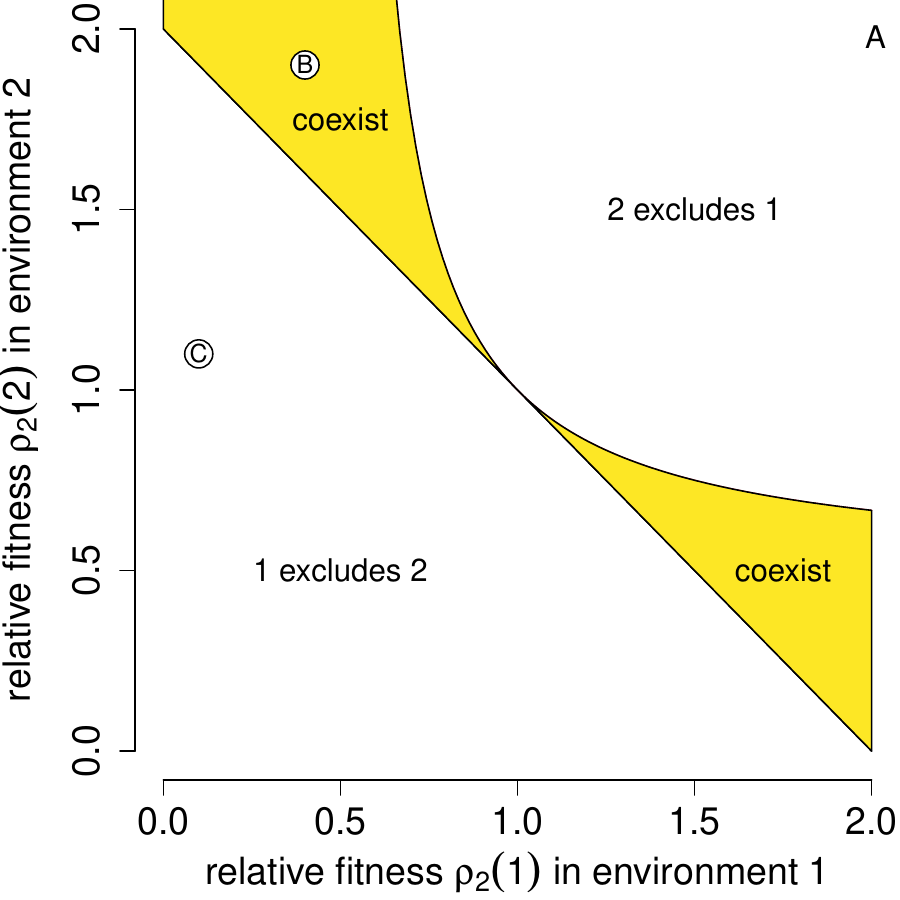}
    \includegraphics[width=0.275\textwidth]{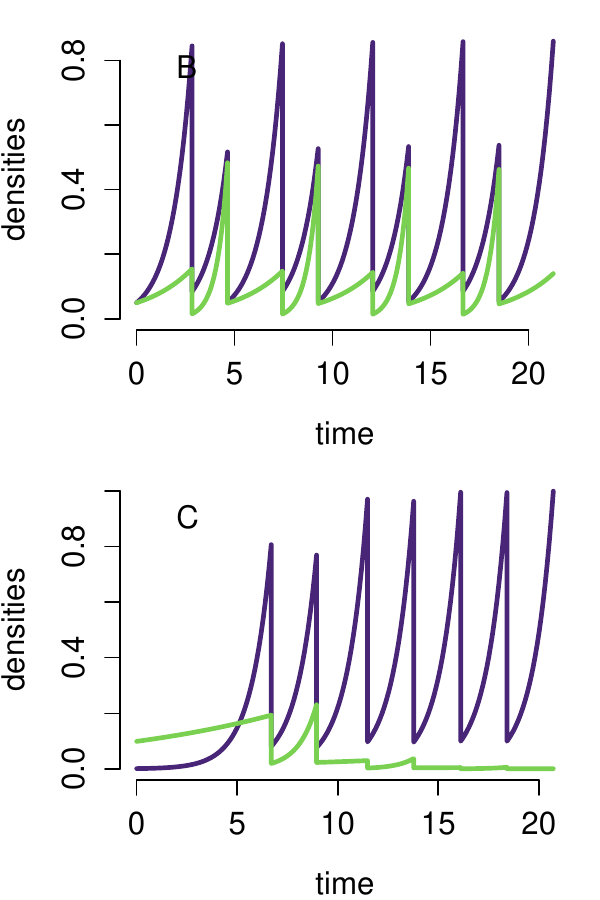}
    \caption{Coexistence and exclusion in periodic serial passage experiments. (A) Regions of coexistence and exclusion based on the signs of the average per-capita growth rates. (B) Competitive dynamics for the parameter combinations indicated in (A). Parameters: $\delta=0.1$.}
    \label{fig:YiDean}
\end{figure} 

\subsection{Coexistence and exclusion of competitors in serial passage experiments}\label{sec:serial:applications}

Recall the competition model in Section~\ref{sec:serial} with extinction set $\S_0=\{(\delta,0,0),(0,\delta,0),(\delta,0,1),(0,\delta,1)\}$. For a Morse decomposition of $\S_0$, I divide it into two sets $M_1=\{(\delta,0,0),(\delta,0,1)\}$ and $M_2=\{(0,\delta,0),(0,\delta,1)\}$ consisting of the two periodic orbits of a single species of the kick-to-kick map $\kappa$. 

Section~\ref{sec:serial:ri} identified the unique ergodic measures $\mu_1,\mu_2$ supported on the sets $M_1$ and $M_2$, respectively. The average per-capita growth rates $r_i(\mu_j)$ at these ergodic measures are given by \eqref{eq:serial:ri}. Lemma~\ref{lemma:zero} implies $r_1(\mu_1)=r_2(\mu_2)=0$. By Theorem~\ref{thm:main}, the model is permanent if:
\[
\frac{1}{2}\sum_{y=0,1} \frac{\rho_2(y)}{\rho_1(y)}>1  \mbox{ and }\frac{1}{2}\sum_{y=0,1} \frac{\rho_1(y)}{\rho_2(y)}>1.
\]
Equivalently, 
\begin{equation}\label{eq:serial:permanence}
\frac{1}{2}\sum_{y=0,1} \frac{\rho_1(y)}{\rho_2(y)}>1 > \frac{1}{\frac{1}{2}\sum_{y=0,1} \frac{\rho_2(y)}{\rho_1(y)}}.
\end{equation}
Condition~\eqref{eq:serial:permanence} is equivalent to the condition found in \citet{yi-dean-2013-bounded}. The left-hand side of \eqref{eq:serial:permanence} corresponds to the arithmetic mean of the ratios $\frac{\rho_1(0)}{\rho_2(0)}$ and $\frac{\rho_1(1)}{\rho_2(1)}$, while the right-hand side corresponds to the harmonic mean of these ratios. Thus, coexistence requires that the arithmetic mean of the relative fitnesses for each species is greater than the corresponding harmonic mean. This creates a ``fitness balancing'' requirement for coexistence.

When one of the inequalities in \eqref{eq:serial:permanence} is violated, Proposition~\ref{prop:extinction} implies that one of the periodic orbits in the extinction set is an attractor and, consequently, the system is not permanent. For example, if 
\begin{equation}\label{eq:serial:one_goes_extinct}
\frac{1}{2}\sum_{y=0,1} \frac{\rho_1(y)}{\rho_2(y)}<1
\end{equation}
then $r_1(\mu_2)<0$ and $M_2$ is an attractor. Moreover, according to Jensen's inequality, if \eqref{eq:serial:one_goes_extinct} holds, then 
\[ \frac{1}{\frac{1}{2}\sum_{y=0,1} \frac{\rho_2(y)}{\rho_1(y)}}\le {\frac{1}{2}\sum_{y=0,1} \frac{\rho_1(y)}{\rho_2(y)}}<1. \] 
Thus, $r_2(\mu_1)>0$ and $M_1$ is a repeller. This suggests that species $2$ excludes species $1$. However, whether exclusion occurs for all positive initial conditions is an open question, as the attractor-repeller structure established by our theoretical results guarantees exclusion only for some set of initial conditions but does not necessarily determine the full basin of attraction for the exclusion attractor. Using a similar argument for the reverse of the second inequality in \eqref{eq:serial:permanence} implies that, generically, the species coexist or one excludes the other. Notably, there is no bi-stability in the system. Figure~\ref{fig:YiDean}A illustrates the main conclusions of this analysis.

\begin{figure}[t!!]
    \centering
    \includegraphics[width=0.4\textwidth]{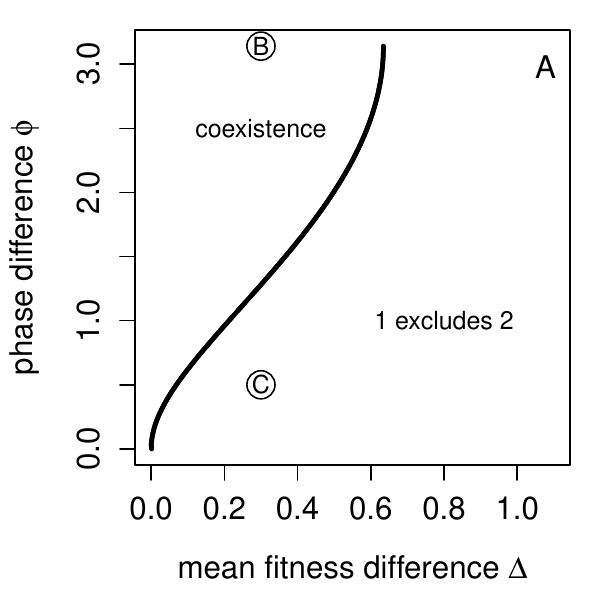}
    \includegraphics[width=0.275\textwidth]{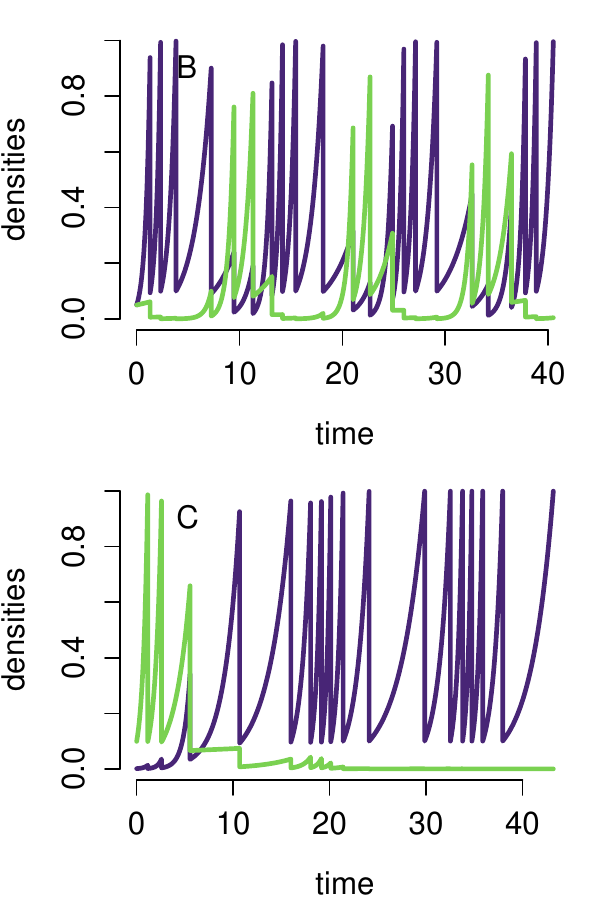}
    \caption{Coexistence and exclusion in quasi-periodically forced serial passage experiments. (A) Regions of coexistence and exclusion based on the signs of the average per-capita growth rates. (B) Competitive dynamics for the parameter combinations indicated in (A). Parameters: $\rho_1(\cos\theta,\sin\theta)=1.1+\Delta+\sin(\theta)$ where $\Delta$ is the mean fitness difference, $\rho_2(\cos\theta,\sin\theta)=1.1+\sin(\theta+\phi)$ and $\psi=1$. See the end of Section~\ref{sec:serial:applications} for the modified model description.}
    \label{fig:YiDean-Quasi}
\end{figure} 

All of these arguments extend to more complex environmental dynamics as characterized by the auxiliary variable $y$. For example, one can replace the two-periodic kick dynamics in $\{0,1\}$ with a quasi-periodic motion on the circle. Specifically, let $S^1=\{(\cos \theta,\sin \theta):\theta\in \R\}$ be the unit circle in $\R^2$. Let $\rho_1,\rho_2:S^1\to (0,\infty)$ be two continuous positive functions. As before, define the flow by $\frac{dx_i}{dt}=\rho_i(y)x_i, \frac{dy}{dt}=0$. The time to kick $\tau(x,y)$ is implicitly given by $\sum_i x_i \exp(\rho_i(y)\tau)=1/\delta$. The kick map is $H(x_1,x_2,\cos \theta,\sin\theta)=(\delta x_1, \delta x_2,\cos(\theta+\psi), \sin(\theta+\psi))$, where $\psi/\pi$ is irrational. In this case, the average per-capita growth rates in the extinction set are:
\[
\left(\ln \frac{1}{\delta}\right)\int_0^{2\pi} \left(\frac{\rho_i(\cos\theta,\sin\theta)}{\rho_j(\cos\theta,\sin\theta)}-1 \right) \frac{d\theta}{2\pi} \mbox{ with }i=1,2 \mbox{ and }j=1,2 
\]

For the quasi-periodic case illustrated in Figure~\ref{fig:YiDean-Quasi}, we set $\rho_1(\cos\theta,\sin\theta)=1.1+\Delta+\sin(\theta)$ where $\Delta$ represents the mean fitness difference, and $\rho_2(\cos\theta,\sin\theta)=1.1+\sin(\theta+\phi)$ with $\psi=1$ determining the rotation angle in the kick map. {This figure highlights that the phase difference $\phi$, a measure of nice differentiation, needs to be sufficiently large relative the mean fitness difference $\Delta$ to ensure coexistence.}

\subsection{Persistence and Extinction of the Two-Patch Pulsed Consumer-Resource Model}

\begin{figure}
    \centering
    \includegraphics[width=0.4\textwidth]{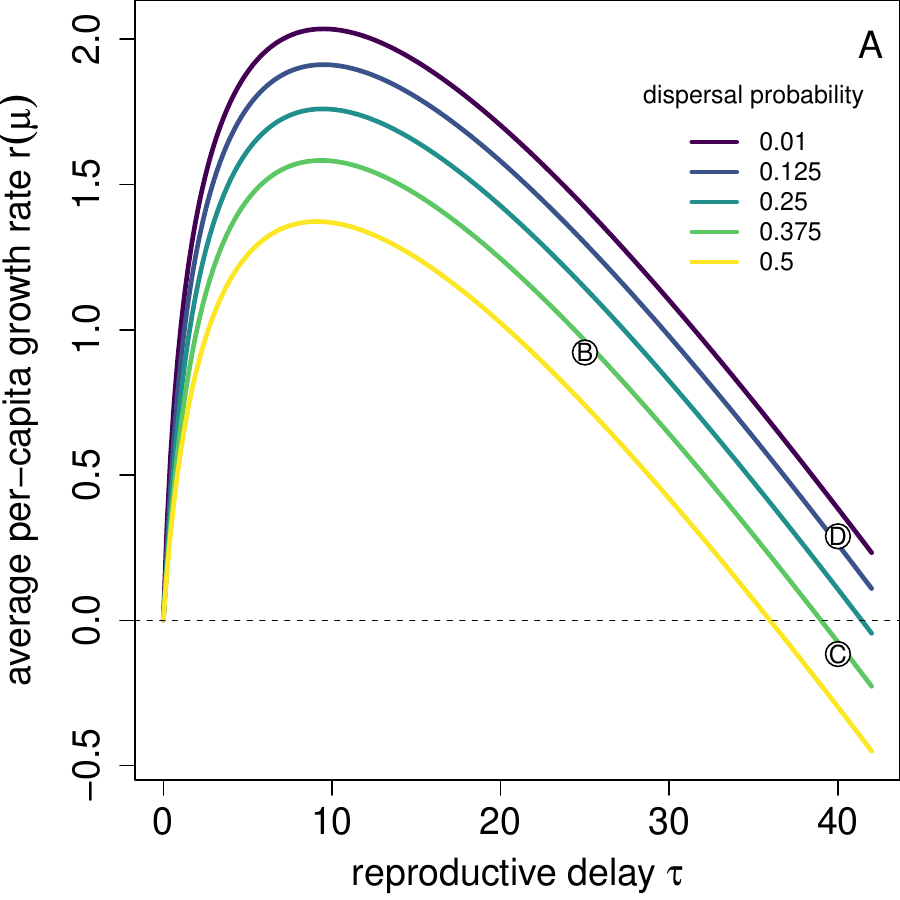}
    \includegraphics[width=0.4\textwidth]{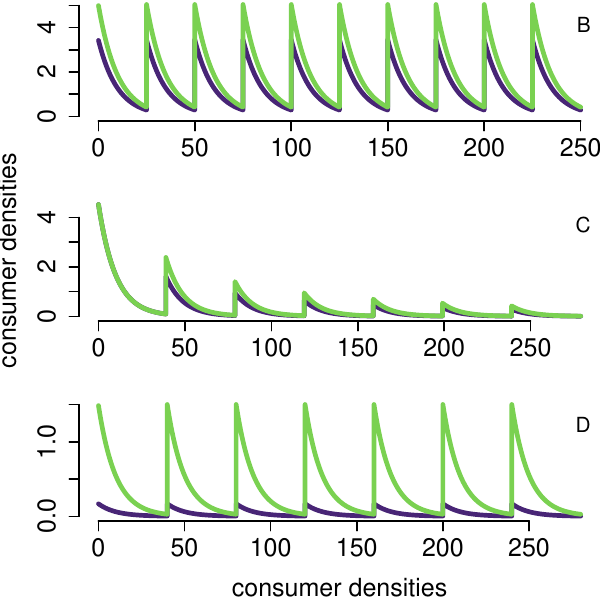}
    \caption{Impacts of reproductive delays and natal dispersal on consumer persistence. (A) The average per-capita growth rate $r(\mu)$ at the extinction set as a function of the reproductive delay $\tau$ for different dispersal probabilities. (B)-(D) The consumer dynamics for the reproductive lags $\tau$ and dispersal probabilities $d$ indicated in (A). Parameters: $\alpha=(0,4)$, $\beta=(0.1,0.1)$, $a=(0.1,0.1)$, $m=(0.1,0.1)$, $\theta=0.5$, $\tau$ as shown, and $d=0.0, 0.125, 0.25, 0.375, 0.5$.}
    \label{fig:consumer-resource}
\end{figure} 

Consider the model introduced in Section~\ref{sec:2patch} with a consumer living in two patches. For this model, Section~\ref{sec:2patch:ri} showed that there was a unique ergodic measure $\mu$ on the extinction set $\S_0$. This invariant measure corresponds to the consumer-free equilibrium $z^*=(0,\alpha_1/\beta_1,\alpha_2/\beta_2,0,0,y_5^*)$ where $y_5^*$, given by \eqref{eq:y5*}, corresponds to the stable patch distribution of an infinitesimally small consumer population at the consumer-free equilibrium. The average per-capita growth rate $r(\mu)$ of the consumer at this equilibrium is given by \eqref{eq:2patch:ri}.

By Theorem~\ref{thm:main} with the trivial Morse decomposition $M_1=\{z^*\}$ of $\S_0$, $r(\mu)>0$ implies that the consumer persists in the sense of permanence. Alternatively, if $r(\mu)<0$, then Proposition~\ref{prop:extinction} implies that the extinction set is an attractor. Furthermore, since the resource density in patch $i$ is always $\leq \alpha_i/\beta_i$ on the global attractor $\Gamma$, $r(z)\leq r(\mu)<0$ for any $z\in \Gamma$. Consequently, the consumer-free equilibrium is globally stable whenever $r(\mu)<0$.

To explore some biological implications of this persistence condition, I consider how $r(\mu)$ depends on the dispersal probability $d$ and the reproductive delay $\tau$. As $r(\mu)$ corresponds to the dominant eigenvalue of the matrix \eqref{eq:matrix:2patch}, the reduction principle~\citep{feldman-liberman-1986-modifiers,kirkland-etal-2006-dispersal,altenberg-etal-2017-reduction}--which states that in heterogeneous environments, random dispersal reduces population growth rates--implies that $r(\mu)$ is a decreasing function of the dispersal rate $d$ whenever $\gamma_1\neq \gamma_2$ where $\gamma_i = e^{-m_i\tau}(1+\theta\tau a_i \alpha_i/\beta_i)$, that is, whenever there is spatial heterogeneity. In the limits of no dispersal and complete dispersal, we have:
\[\lim_{d\downarrow 0}r(\mu)=\max_i \ln \gamma_i \mbox{ and }\lim_{d\uparrow 1}r(\mu)=\ln \frac{\gamma_1+\gamma_2}{2} . \] 
Therefore, whenever $\frac{\gamma_1+\gamma_2}{2}<1< \max_i \gamma_i$, there is a critical dispersal rate $d^*$ below which the consumer persists and above which the consumer goes extinct.

Alternatively, to understand the impact of reproductive delay $\tau$ on consumer persistence, I explore the behavior of $r(\mu)$ near $\tau=0$ and in the limit $\tau\to\infty$. For $\tau=0$, the matrix \eqref{eq:matrix:2patch} is the doubly stochastic matrix $\begin{pmatrix}
    1-d&d\\ d& 1-d
\end{pmatrix}$ with dominant eigenvalue $1$, dominant right eigenvector $v=\begin{pmatrix}
    \frac{1}{2}\\\frac{1}{2}
\end{pmatrix}$, and dominant left eigenvector $w=\begin{pmatrix}
    1& 1
\end{pmatrix}$. Let $\lambda(\tau)$ be the leading eigenvalue of the matrix $A(\tau)$ given by \eqref{eq:matrix:2patch}. \citet[Theorem 6.3.12]{horn-johnson-1994-matrix} implies that:
\[
\lambda'(0)=wA'(0)v=\frac{1}{2}\sum_i(\theta a_i \alpha_i/\beta_i - m_i). 
\]
In particular, $\lambda(\tau)$ increases for small reproductive delays if $\frac{1}{2}\sum_i(\theta a_i \alpha_i/\beta_i - m_i)>0$. Alternatively, $\lim_{\tau\to\infty}\lambda(\tau)=0$. Hence, if $\frac{1}{2}\sum_i \theta a_i \alpha_i/\beta_i> \frac{1}{2}\sum_i m_i$ then there are two critical values of $\tau$, $0<\tau_*\le \tau^*$, such that $r(\mu)>0$ for $0<\tau<\tau_*$ and $r(\mu)<0$ for $\tau>\tau^*$. I conjecture that $\tau_*=\tau^*$.

Figure~\ref{fig:consumer-resource} illustrates the simultaneous effects of reproductive delay $\tau$ and dispersal probability $d$ on consumer persistence. In this figure, patch $1$ has no resource production ($\alpha_1=0$) and therefore does not support consumer reproduction, while patch $2$ has positive resource production ($\alpha_2=4$) sufficient to support consumer reproduction. As $\frac{1}{2}\sum_i \theta a_i \alpha_i/\beta_i>\frac{1}{2}\sum_i m_i$, $r(\mu)$ is positive for sufficiently small reproductive delays and negative for sufficiently long reproductive delays (Figure~\ref{fig:consumer-resource}A). Consistent with my conjecture that $\tau_*=\tau^*$, the low-density consumer growth rate $r(\mu)$ has a unique maximum with respect to $\tau$. Moreover, consistent with the reduction principle, $r(\mu)$ decreases with the probability of dispersal. Hence, intermediate reproductive delays or lower dispersal probabilities are required for consumer persistence (Figure~\ref{fig:consumer-resource}B--D).

\section{Discussion} 

Natural populations experience a complex interplay of continuous and discrete processes: continuous growth and interactions are punctuated by discrete reproduction events, dispersal, and external disturbances. This work developed a mathematical framework that characterizes permanence in flow-kick systems through invasion growth rates. By accommodating auxiliary variables and the state-dependent timing of discrete events, the framework extends beyond previous work on purely continuous or discrete systems~\citep{roth-etal-2017-permanence,patel-schreiber-2018-permanence,hofbauer-schreiber-2022-permanence}. However, several mathematical challenges and opportunities for improvement remain. Applying these results to three ecological models reveals how the interplay between continuous and discrete dynamics can lead to some counterintuitive predictions for population persistence and ecological outcomes.

\subsection{Mathematical Advances and Challenges}

The permanence results in this paper involved defining the invasion growth rates and proving the permanence of the discrete-time kick-to-kick map $\kappa$. The assumptions of dissipativeness, positivity and continuity allowed the extension of permanence from the discrete-time map to the continuous-time flow-kick system (Lemma~\ref{lem:kick-to-kick-permanence-implies-flow-kick-permanence}).

Theorem~\ref{thm:main} extends the work of \citet{roth-etal-2017-permanence} of discrete-time maps in three important ways. First, it generalizes their framework through the use of auxiliary variables. This generalization is significant because, although \citet{roth-etal-2017-permanence} developed invasion growth rate characterizations of permanence for discrete-time maps with stage structure, our approach encompasses a much broader class of systems. As illustrated by the spatially structured consumer model in Section~\ref{sec:2patch}, stage structure can be incorporated by a change of variables corresponding to the total population density (i.e., $x_i$) and the frequencies in the different stages (that is, part of the variables in the auxiliary state $y$). However, our auxiliary variables can also represent deterministic forcing (e.g. periodic, quasiperiodic, chaotic) of the community dynamics as illustrated in Section~\ref{sec:serial:applications}, trait evolution \citep{patel-schreiber-2018-permanence}, plant-soil feedbacks, and many other processes. Moreover, by using auxiliary variables, our theorem makes no assumptions about the irreducibility of stage structure models, in contrast to \citet{roth-etal-2017-permanence}. Building on this first advancement, the second contribution is methodological: our proof of Theorem~\ref{thm:main} is substantially simpler than the approach in \citet{roth-etal-2017-permanence}. Their method required extending the models to skew product systems in an infinite-dimensional sequence space (i.e., the inverse limit of the system) and relied on non-autonomous Perron-Frobenius theorems, such as those in \citet{ruelle-1979-exponents}. Our approach avoids these complications. Finally, from a theoretical perspective, Theorem~\ref{thm:main2} provides the first proof that uses invasion graphs to characterize permanence for discrete-time models.

The key lemma for the proofs (Lemma~\ref{lem:key:one}) simplifies the earlier proofs using ergodic theory to characterize permanence~\citep{schreiber-2000-permanence,hofbauer-schreiber-2010-robust,roth-etal-2017-permanence,hofbauer-schreiber-2022-permanence} and provides a unified approach to the two permanence theorems and the different model types. In forward time, this lemma has a simple biological interpretation. If all species are initially present, then the invasion growth rates associated with the community trajectory are non-positive as their densities remain bounded. Conversely, going backward in time, (bounded) trajectories are associated with non-negative invasion growth rates. Namely, they were either rare in the past and became more common (i.e., a positive invasion growth rate)  or they were common and remained common (i.e. a zero invasion growth rate). 

Despite this progress, there are multiple mathematical opportunities and open challenges. Most importantly, in the case of cyclic invasion graphs, there is no general criterion for permanence. Theorem~\ref{thm:main} suggests that requiring the Hofbauer condition (Equation~\ref{eq:hofbauer:condition}) for each connected component of the invasion graph should be sufficient (see, e.g., the conjecture in \citet{spaak-etal-2023-assembly}). However, this has yet to be proved. Even if such a result holds, a substantial gap will remain. Specifically, there is a gap between the sufficient conditions for permanence in Theorem~\ref{thm:main} and the sufficient condition in Proposition~\ref{prop:extinction} for the existence of extinction-bound trajectories. This gap arises due to the potential formation of complex networks of heteroclinic orbits (trajectories connecting different invariant sets) in ecological models~\citep{hofbauer-1994-heteroclinic,schreiber-rittenhouse-2004-assembly,hofbauer-schreiber-2004-persist,hofbauer-2007-persist,spaak-etal-2023-assembly}.

Beyond these finite-dimensional models, there is a need to develop a more comprehensive approach based on invasion growth rates for infinite-dimensional impulsive systems. These models are widely used in mathematical ecology and epidemiology~\citep{luo-wang-2021-impulsive,fazly-etal-2017-impulsive,yang-zhong-2014-impulsive,lewis-li-2012-spreading}. Part of the challenge is the development of the appropriate spectral theory for non-autonomous operators. However, I speculate that the use of auxiliary variables may provide a useful way to circumvent some of these challenges.

\subsection{Biological Insights and Opportunities}

Applying the invasion growth rate criterion to three ecological models provides mathematically rigorous extensions of previously observed phenomena, offers several new biological insights, and highlights opportunities for future research.

The analysis of \citet{yi-dean-2013-bounded}'s model of microbial serial transfer experiments demonstrates how state-dependent timing of kicks (dilution to new containers) can act as a coexistence mechanism. In this model, competing species grow exponentially between dilution events in environments that alternately favor one species or the other. As observed by \citet{yi-dean-2013-bounded}, if dilutions occur at regularly spaced time intervals, competitive exclusion results: the species with the highest average per-capita growth rate excludes the other species {as $r_i(\mu)= \frac{\tau}{2}\sum_y \rho_i(y)+\ln \delta$}.

In contrast, when the timing of the dilution depends on the critical total population size ($x_1+x_2$), coexistence becomes possible through a storage effect mechanism~\citep[page E55]{li-chesson-2016-plankton}. This occurs because faster-growing species in any given environment reach higher densities more quickly, triggering earlier dilution events. These earlier dilutions effectively reduce the competitive advantage of the temporarily favored species, allowing both species to persist. This represents a classic fluctuation-dependent coexistence mechanism~\citep{chesson-1994-competition}.

The analysis presented here demonstrates that these conclusions hold even when the environment exhibits greater complexity in its fluctuations, e.g., with more than two environmental states or with quasiperiodic environments. More importantly, my analysis suggests that there are only two possible ecological outcomes: coexistence or exclusion. In particular, bistability, as observed in classical Lotka-Volterra competition models, is not possible, a conclusion not observed by \citet{yi-dean-2013-bounded}. However, verifying that exclusion occurs for all positive initial conditions remains an open mathematical challenge.

The analysis of a two-patch version of \citet{pachepsky-etal-2008-reproduction}'s consumer-resource model reveals a counterintuitive effect of reproductive delays on consumer persistence. \citet{pachepsky-etal-2008-reproduction} likely didn't comment on this effect due to their non-dimensionalization of time in the model. To understand this phenomenon, first consider the simpler case of a spatially homogeneous (or single patch) model. Here, the consumer invasion growth rate depends on two competing factors. First, the probability of survival to reproduction ($e^{-m\tau}$) decreases exponentially with longer delays ($\tau$). Second, reproductive output ($1+\theta\tau a \alpha/\beta$) increases only linearly with longer delays. This creates a fundamental trade-off: longer reproductive delays allow more time for offspring production, but the survival probability decreases more rapidly than offspring production increases. As a result of this trade-off, when the rate $\theta a \alpha/\beta$ of potential offspring production exceeds the per-capita mortality rate $m$, the per-capita growth rate is maximized at an intermediate reproductive delay of $\frac{1}{m}-\frac{1}{\theta a \alpha/\beta}$. {How spatial heterogeneity determines the} optimal reproductive delay is an interesting question that deserves further analysis.

Finally, the analysis of the empirically parameterized Lotka-Volterra model reveals a counterintuitive relationship between facilitation and disturbance effects on community composition. When competitors facilitate each other's growth (directly or indirectly), disturbance targeting one species can cascade through these positive interactions, potentially causing extinction of species {least} affected {directly} by the disturbance. This presents a paradox: the very facilitative interactions that might promote diversity under stable conditions can amplify vulnerability when disturbances occur. Given increasing empirical evidence that facilitation is more common in plant communities than previously thought~\citep{burns-strauss-2011-ecological,siefert-etal-2018-competition,buche-etal-2025-interactions}, using the methods presented here to explore the interaction between facilitation and disturbance in community composition provides exciting opportunities for future research. 

On the more mathematical side, our analysis of the Lotka-Volterra systems showed that the invasion graph for a flow-kick system is always a subgraph of the invasion graph for an appropriately defined autonomous Lotka-Volterra model. Understanding whether there are cases where it is a proper subgraph remains an interesting yet challenging mathematical problem.

\subsection{Concluding remarks}

In conclusion, this work advances both the mathematical theory of permanence and our ecological understanding of how continuous and discrete processes interact to influence species persistence. The invasion growth rate framework developed here provides a powerful tool for analyzing coexistence and extinction that complements existing approaches in theoretical ecology. Although important mathematical challenges remain—-particularly for systems with cyclic invasion graphs—the applications presented demonstrate the framework's ability to reveal counterintuitive ecological dynamics. These range from how state-dependent disturbances can promote coexistence in microbial communities to how facilitation networks can amplify disturbance effects in plant communities.

Future work that incorporates stochasticity, develops computational methods to calculate invasion graphs in a mathematically rigorous manner, and expands applications to new ecological contexts will further enhance our ability to predict and manage the dynamics of natural populations experiencing both continuous growth and discrete disturbances.

\section*{Acknowledgments} This work was partially funded by National Science Foundation Grant DEB-2243076.  The author thanks Jakob Karre-Rasmussen for constructive feedback on earlier versions of the manuscript and the participants of the joint Hastings-Schreiber Lab Meetings in Spring 2024 for inspiring work on flow-kick dynamics.

\bibliography{IG}

\section{Appendix A: Proofs of Theorem~\ref{thm:main} and Lemma~\ref{lem:kick-to-kick-permanence-implies-flow-kick-permanence}}\label{proof:main_theorem}
To prove Theorem~\ref{thm:main}, I use a characterization of permanence for maps due to \citet{hofbauer-so-1989-persistence}. To state this characterization, recall that the stable set $W^s(A)$ for an invariant set $A\subset \S$ for $\kappa$ is 
\[
W^s(A)=\{z\in \S: \omega(z)\subset A\}.
\]
Let $\{M_1,\dots,M_\ell\}$ be a Morse decomposition of $\Gamma_0$. 

\begin{theorem}[\citet{hofbauer-so-1989-persistence}]\label{thm:hofbauer-so} Assume each Morse set $M_i$ is isolated with respect to $\S$ i.e. $M_i$ is the largest invariant set in some neighborhood $U_i\subset \S$ of $M_i$. Then \eqref{eq:flow-kick} is permanent if and only if $W^s(M_i)\subset \S_0$ for all $i$
\end{theorem}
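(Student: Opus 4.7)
\emph{Proof proposal.} The plan is to establish both directions of the equivalence. The forward direction is immediate: if $\kappa$ is permanent, then there is a positive attractor $A \subset \S_+$ whose basin contains all of $\S_+$, so for any $z \in \S_+$ we have $\omega(z) \subset A \subset \S_+$. Since each Morse set $M_i$ lies in $\Gamma_0 \subset \S_0$ and is therefore disjoint from $\S_+$, this forces $\omega(z) \not\subset M_i$, giving $W^s(M_i) \cap \S_+ = \emptyset$, i.e., $W^s(M_i) \subset \S_0$.

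For the substantive direction, I would assume $W^s(M_i) \subset \S_0$ for every $i$ and argue by contradiction. Failure of permanence produces some $z_0 \in \S_+$ whose $\omega$-limit $L := \omega(z_0)$ intersects $\S_0$; by dissipativeness (\textbf{A5}), $L \subset \Gamma$, so $L \cap \Gamma_0 \neq \emptyset$. The set $L$ is compact, $\kappa$-invariant, and internally chain transitive. The core tool is a Butler--McGehee style lemma applied to the isolated invariant sets $M_i \subset \S$: whenever $L$ meets $M_i$ but is not contained in it, $L$ must contain a point $y \notin M_i$ whose forward orbit remains in $L \setminus M_i$ and satisfies $\omega(y) \subset M_i$, placing $y \in W^s(M_i) \setminus M_i$. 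Iterating this construction over the finite, linearly ordered Morse decomposition $\{M_1, \dots, M_\ell\}$, and invoking internal chain transitivity of $L$, one shows the descent terminates with $L \subset M_j$ for some $j$. Then $z_0 \in W^s(M_j) \subset \S_0$, contradicting $z_0 \in \S_+$. The uniform lower bound $M > 0$ demanded by permanence then follows from a standard compactness argument on the global attractor $\Gamma$.

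The main obstacle is the Butler--McGehee step in this discrete-time, state-dependent setting. In contrast to flows, $\omega$-limit sets of maps need not be connected, so the argument must rely on internal chain transitivity to string together excursions of $L$ between Morse sets. A related technicality is that $L$ may weave between $\Gamma_0$ and $\Gamma_+$, so the isolating neighborhoods of each $M_i$ must be taken within all of $\S$ (not merely within $\Gamma_0$) in order to capture the trajectory fragments of $L$ that approach $M_i$ from outside. The linear ordering of the Morse decomposition excludes cycles and guarantees termination of the iterative descent, but the bookkeeping---ensuring the descent strictly advances along the Morse order at each step---requires care.
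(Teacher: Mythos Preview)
The paper does not prove Theorem~\ref{thm:hofbauer-so}; it is quoted verbatim from \citet{hofbauer-so-1989-persistence} and used as a black box in the proof of Theorem~\ref{thm:main}. So there is no in-paper argument to compare against, and your proposal should be read as an attempt to reconstruct the original Hofbauer--So proof.

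Your outline does follow the classical Butler--McGehee route that underlies the original result. The forward direction is fine, and the descent through the Morse order is the right mechanism. Two small corrections: the Butler--McGehee conclusion you need is just $y\in L\cap(W^s(M_i)\setminus M_i)$---the clause ``forward orbit remains in $L\setminus M_i$'' is neither correct nor needed---and internal chain transitivity is not really the engine. What makes the descent strictly advance is that $W^s(M_i)\subset\S_0$ forces $y\in\Gamma_0$, and since $\S_+$ is forward invariant under $\kappa$, every negative orbit through $y$ within $L$ also lies in $\Gamma_0$; then the Morse ordering on $\Gamma_0$ applies directly.

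There is, however, a genuine gap at the last step. Your contradiction argument only shows $\omega(z)\cap\S_0=\emptyset$ for every $z\in\S_+$, i.e.\ \emph{pointwise} persistence. The sentence ``the uniform lower bound\ldots follows from a standard compactness argument on $\Gamma$'' is not correct without further input: a dissipative map can have a sequence of invariant sets in $\Gamma_+$ accumulating on $\Gamma_0$ (e.g.\ fixed points at $1/n$ in a one-dimensional caricature), so every $\omega(z)$ avoids $\S_0$ yet no uniform bound exists. What rules this out is precisely the hypothesis that each $M_i$ is isolated \emph{in $\S$}, and closing the gap requires a second use of that hypothesis---showing that $\Gamma_0$ itself is isolated in $\S$ and hence the repeller dual to a positive attractor. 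That step, not the Butler--McGehee chaining, is where the isolation-in-$\S$ assumption does its real work in Hofbauer--So's argument, and your sketch does not supply it.
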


To use Theorem~\ref{thm:hofbauer-so}, one needs to verify two properties for each Morse set $M_i$: (i) $W^s(M_i)\subset \S_0$ and (ii) $M_i$ is isolated in $\S$. Let $M_i$ be a Morse set.   Lemma~\ref{lem:key:one} holds the key to verifying both properties.   {The proof of this lemma relies on weak* convergence of Borel probability measures on a compact set $\Gamma$~\citep{royden-fitzpatrick-2010-real} . Specifically, a sequence of Borel probability  measures $\mu_n$ with $\mu_n(\Gamma)=1$  converges in the weak* topology to the Borel probability measure $\mu$  with $\mu(\Gamma)=1$ (i.e. $\lim_{n\to\infty}\mu_n =\mu$) if $\lim_{n\to\infty}\int h(z)\mu_n(dz)=\int h(z)\mu(dz)$ for all continuous functions functions $h:\Gamma\to\R.$}

\begin{proof}[Proof of Lemma~\ref{lem:key:one}.]
Let $\tilde z\in \S_+$. Define a sequence of probability measures $\mu_n$ by 
\[
\int h(z)\mu_{n}(dz)=\frac{1}{n}\sum_{m=1}^{n} h(\kappa^m(\tilde z)) \mbox{ for all continuous functions }h:\S\to\R 
\]
i.e. the time average of the forward orbit of $\tilde z$ until time $n$. By assumption \textbf{A5}
and weak* compactness, there is a subsequence $n_\ell \to \infty$ such that $\mu_{n_\ell}$ converges in the weak* topology to  a Borel probability measure $\mu$ as $\ell\to\infty$ {i.e. \begin{equation}\label{eq:weak*}\lim_{\ell\to \infty} \int h(z)\mu_{n_\ell}(dz)=\lim_{\ell\to \infty} \frac{1}{n_\ell}\sum_{m=1}^{n_\ell} h(\kappa^m(\tilde z)) =\int h(z)\mu(dz) \mbox{ for all continuous functions }h:\S\to\R.\end{equation}} By the definition of $\mu_n$ and weak* convergence, $\mu(\omega(\tilde z))=1$. By assumption \textbf{A5}, $\omega(\tilde z)\subset \Gamma$. I claim that $\mu$ is an invariant measure for the kick-to-kick map $\kappa$. To see why, it suffices to verify that $\int h(\kappa(z))\mu(dz) = \int h(z)\mu(dz)$ for any continuous function $h: \Gamma \to \R$. Let $h: \Gamma \to \R$ be a continuous function. Then
\[
\begin{aligned}
\left| \int (h(\kappa(z)) - h(z)) \mu(dz) \right| &=& \lim_{\ell \to \infty} \frac{1}{n_\ell} \left| \sum_{m=1}^{n_\ell} h(\kappa^{m+1}(\tilde z))-h(\kappa^m(\tilde z)) \right|\\
&=& \lim_{\ell \to \infty} \frac{1}{n_\ell} \left| h(\kappa^{n_\ell+1}(\tilde z))-h(\kappa(\tilde z)) \right|\\
& \le & \limsup_{\ell \to \infty} \frac{1}{n_\ell} 2\max_{z\in \Gamma}|h(z)|=0
\end{aligned}
\]
As this holds for any continuous $h:\Gamma\to \R$, $\mu$ is an invariant measure. 

For each $i\in \{1,2,\dots,k\}$ and $z=(x,y)$, let $\pi_i(z)=x_i$ be the projection onto the density of species $i$. {As $\int_0^t \frac{c'(t)}{c(t)}dt=\ln \frac{c(t)}{c(0)}$ for any continuously, differentiable, positive function $c:\R\to\R$, it follows that f}or any initial condition $z$ with $\pi_i(z)>0$
\[
\ln \frac{\pi_i(z.t)}{\pi_i(z)}=\int_0^t f_i(z.s)ds
\]
and, after exponentiating,  
\begin{equation}\label{eq:percapita_representation}
\pi_i(z.t)=\pi_i(z)\exp\left(\int_0^t f_i(z.s)ds\right).
\end{equation}
Equation~\eqref{eq:percapita_representation} holds trivially when $\pi_i(z)=0$.
Thus, the kick-to-kick map satisfies $\pi_i(\kappa (z))=\pi_i(z) \Psi_i(z)$ where 
\begin{equation}\label{eq:fitness}
\Psi_i(z)=\exp\left(\int_0^{\tau(z)} f_i(z.s) \,ds\right) F_i(z.\tau(z)).
\end{equation}
Using this representation recursively, one gets
\[
\pi_i\left(\kappa^n(z)\right)=\pi_i(z) \prod_{m =0}^{n-1} \Psi_i\left(\kappa^m(z)\right).
\]
and when $\pi_i(z)>0$,
\[
\ln \frac{\pi_i\left(\kappa^n(z)\right)}{\pi_i (z)}=\sum_{m =0}^{n-1} \ln \Psi_i\left(\kappa^m(z)\right).
\]
{Equation~\ref{eq:weak*}, the definition of $\Psi_i$, and the definition of $r_i(\mu)$ in equations~\eqref{eq:per-capita_growth_rate}-\eqref{eq:per-capita_growth_rate_at_mu} imply}
\[
r_i(\mu)=\lim_{\ell\to\infty}\frac{1}{n_\ell}\sum_{m =0}^{n_\ell-1} \ln \Psi_i\left(\kappa^m(\tilde z)\right).
\]
Assumption \textbf{A3} implies 
\begin{equation}\label{eq:ri_upperbound}
\lim_{\ell\to\infty}\frac{1}{n_\ell}\sum_{m =0}^{n_\ell-1} \ln \Psi_i\left(\kappa^m(\tilde z)\right)=\lim_{\ell\to\infty}\ln \frac{\pi_i\left(\kappa^{n_\ell}(\tilde z)\right)}{\pi_i (\tilde z)}\le \limsup_{\ell\to\infty}\ln \frac{K}{\pi_i (\tilde z)} = 0
\end{equation}
where $K$ comes from assumption \textbf{A5}. Thus, $r_i(\mu)\le 0$ for all $i\in\{1,2,\dots,k\}$.

Now, let $\mathcal{O}^-=\{z(-n)\}_{n=0}^\infty\subset \S_+\cap \Gamma$ be a negative orbit for the kick-to-kick map, i.e., $z(-n+1)=\kappa(z(-n))$ for all $n\le 1$. For $n\ge 1$ define the probability measure $\mu_n$ by
\[
\int h(z)\mu_n(dz)=\frac{1}{n}\sum_{m=1-n}^{0} h(z(m)) \mbox{ for all continuous }h:\Gamma\to \R.
\]
Passing to a subsequence if necessary, let $\mu$ be a weak* limit point $\mu_n$ as $n\to \infty$. By construction, $\mu(\alpha(\mathcal{O}^-))=1$. I claim that $\mu$ is invariant for the kick-to-kick map. As in the case of the forward orbits, it suffices to show that 
\[
\int_\S h(z)\mu(dz)=\int_\S h(\kappa(z))\mu(dz) \mbox{ for all continuous }h:\Gamma\to \R.
\]
By weak* convergence
\[
\begin{aligned}
\left|\int_\S \left(h(z)- h(\kappa (z))\right)\mu(dz)\right|=&\lim_{n\to\infty}\frac{1}{n}\left|\sum_{m=1-n}^{0} h(z(m))- h(\kappa(z(m)))\right|\\
=&\lim_{n\to\infty}\frac{1}{n}\left|h(z(1-n))- h(\kappa(z(0)))\right|
= 0.
\end{aligned}
\]
As this holds for any continuous $h:\Gamma\to \R$,  $\mu$ is $\kappa$ invariant. By construction $\mu(\alpha(\mathcal{O}^-)=1$.

To prove $r_i(\mu)\le 0$ for all $i\in\{1,2,\dots,k\}$, one needs to slightly modify the argument used in the case of an $\omega$-limit set.  Let $\Psi_i(z)$ be defined by \eqref{eq:fitness}. As $\pi_i(\kappa(z))=\pi_i(z)\Psi_i(z)$, one has
\[
\pi_i(z(0))=\pi_i(z(-n))\prod_{m=-n}^{-1} \Psi_i(z(m))
\]
and if $\pi_i(z(-n))>0$,
\[
\ln\frac{ \pi_i(z(0))}{\pi_i(z(-n))}=\sum_{m=-n}^{-1} \ln \Psi_i(z(m)).
\]
By the definition of $\Psi_i$ and weak* convergence of the $\mu_n$, 
\[
r_i(\mu)=\lim_{n\to\infty}\frac{1}{n}\ln\frac{ \pi_i(z(0))}{\pi_i(z(-n))}
\ge \lim_{n\to\infty}\frac{1}{n}\ln\frac{\pi_i(z(0))}{K}=0
\]
for all $i\in \{1,2,\dots,k\}$ and where $K$ comes from assumption \textbf{A5}.  Thus, $r_i(\mu)\ge 0$ for all $i\in\{1,2,\dots,k\}$.
\end{proof}

To verify that the Morse sets $M_i$ are isolated in $\S$, I need the following lemma.

\begin{lemma}\label{lem:key:two} Let $M\subset \Gamma_0$ be a compact,  invariant set for $\kappa$ that is isolated with respect $\Gamma_0$. If $M$ is not isolated with respect to $\S$, then there exist invariant probability measures $\mu_+,\mu_-$ such that $r_i(\mu_+)\ge 0$ and $r_i(\mu_-)\le 0$ for all $i$  and $\mu_+(M)=\mu_-(M)=1$.
\end{lemma}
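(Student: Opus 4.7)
The plan is to construct $\mu_+$ and $\mu_-$ as weak-* limits of invariant measures obtained from Lemma~\ref{lem:key:one} applied to carefully chosen orbits that stay in a shrinking sequence of neighborhoods of $M$. First, because $M$ is isolated in $\Gamma_0$, I fix a closed neighborhood $U \subset \S$ of $M$ for which $M$ is the maximal $\kappa$-invariant subset of $U \cap \Gamma_0$, and choose a nested sequence of compact neighborhoods $V_n \subset U$ with $V_{n+1} \subset V_n$ and $\bigcap_n V_n = M$. Since $M$ is not isolated in $\S$, the maximal invariant subset of $V_n$ in $\S$ strictly contains $M$, so there exists $z_n \in V_n \setminus M$ admitting a complete orbit in $V_n$: both its forward orbit $\{\kappa^m(z_n)\}_{m \ge 0}$ and some negative orbit $\mathcal{O}_n^- = \{z_n(m)\}_{m \le 0}$ remain in $V_n$. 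The representation $\pi_i(\kappa(z)) = \pi_i(z)\Psi_i(z)$ with $\Psi_i>0$ (from \textbf{A3}) shows that $\S_+$ is both forward and backward $\kappa$-invariant, so $z_n \notin M$ combined with $M$ being the maximal invariant subset of $U \cap \Gamma_0$ forces $z_n \in \S_+$.

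Next, I apply Lemma~\ref{lem:key:one} in both time directions. The forward assertion applied to $z_n \in \S_+$ yields an invariant probability measure $\nu_n^-$ with $\nu_n^-(\omega(z_n)) = 1$ and $r_i(\nu_n^-) \le 0$ for all $i \in [k]$. The backward assertion applied to the bounded bi-infinite orbit through $z_n$ (which lies in $\Gamma \cap \S_+$ since $\Gamma$ contains every bi-infinite bounded orbit) yields an invariant probability measure $\nu_n^+$ with $\nu_n^+(\alpha(\mathcal{O}_n^-)) = 1$ and $r_i(\nu_n^+) \ge 0$. Since $\omega(z_n)$ and $\alpha(\mathcal{O}_n^-)$ both lie in the closed set $V_n$, the supports of $\nu_n^\pm$ are contained in $V_n$.

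Finally, since $\{\nu_n^\pm\}$ are supported in the compact set $V_1$, Prokhorov's theorem gives weak-* convergent subsequences with limits $\mu_\pm$. Weak-* limits of $\kappa$-invariant measures are $\kappa$-invariant, by exactly the approximation argument used inside the proof of Lemma~\ref{lem:key:one}. Because $\nu_n^\pm$ is supported in $V_n$ and $\bigcap_n V_n = M$, the limits satisfy $\mu_\pm(M) = 1$. The observable $r_i(z) = \int_0^{\tau(z)} f_i(z.t)\,dt + \ln F_i(z.\tau(z))$ is continuous on $\Gamma$, since $\tau$, the flow $z.t$, $f_i$, and $F_i$ are all continuous, so weak-* convergence gives $r_i(\mu_\pm) = \lim_n r_i(\nu_n^\pm)$, preserving the inequalities $r_i(\mu_+) \ge 0$ and $r_i(\mu_-) \le 0$.

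The main technical subtlety lies in the first step: correctly translating ``$M$ is not isolated in $\S$'' into the existence of complete orbits in $V_n \setminus M$, and then combining this with $\Gamma_0$-isolation and the bi-directional invariance of $\S_+$ to force those orbits into $\S_+$. Once the sequence $z_n \in (V_n \setminus M) \cap \S_+$ with complete orbits in $V_n$ is in hand, the remainder is a routine combination of Lemma~\ref{lem:key:one}, Prokhorov's theorem, and continuity of the observable $r_i$.
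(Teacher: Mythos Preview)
Your proof is correct and follows essentially the same route as the paper: both arguments exploit non-isolation in $\S$ to produce, for each shrinking neighborhood of $M$, a point in $\S_+$ with a complete bounded orbit nearby, invoke Lemma~\ref{lem:key:one} in each time direction to obtain measures with the desired sign on $r_i$, and then pass to a weak-* limit supported on $M$. The only cosmetic difference is that the paper works with the maximal compact invariant sets $K_\ell$ in $1/\ell$-neighborhoods, while you track individual complete orbits in nested $V_n$; these are equivalent formulations of the same idea.
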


For a compact set $M$, define $\rm{dist}(z,M)=\max_{\tilde z\in M}\|z-\tilde z{\|}$.

\begin{proof}[Proof of Lemma~\ref{lem:key:two}]
    As $M$ is not isolated in $\S$, for every natural number $\ell$ there is a compact invariant set $K_\ell\subset \Gamma$ such that $K_\ell \setminus M\neq \emptyset$ and $K_\ell\subset\{z\in \Gamma: \rm{dist}(z,M)\le 1/\ell\}$. As $M$ is isolated with respect to $\Gamma_0$, $K_\ell\cap \Gamma_0=M$ which implies that $K_\ell \cap \Gamma_+\neq \emptyset$. 
    
    For each $\ell$, choose a point $z\in K_\ell \cap \Gamma_+$. As $K_\ell$ is invariant, $\omega(z)\subset K_\ell$.  Lemma~\ref{lem:key:one} implies there is an invariant measure $\mu_\ell$ such that $\mu_\ell(K_\ell)=1$ and $r_i(\mu_\ell)\le 0$ for all $i\in \{1,2,\dots,k\}$. Let $\mu$ be a weak* limit point of $\mu_\ell$ as $\ell\to\infty$. As the set of invariant probability measures supported by $\Gamma$ is compact with respect to the weak* topology, $\mu$ is an invariant measure. As $\mu(K_\ell)=1$ and $K_\ell\subset\{z\in \Gamma: \rm{dist}(z,M)\le 1/\ell\}$, $\mu(M)=1$. Weak* convergence implies that $r_i(\mu)\le 0$ for $i\in \{1,2,\dots,k\}$.

    Next, I prove the existence of an invariant probability measure $\mu$ such that $r_i(\mu)\ge 0$ for $i\in \{1,2,\dots,k\}$ and $\mu(M)=1$. For each $\ell$, choose $z\in K_\ell\cap \Gamma_+$ and a negative orbit $\mathcal{O}^-=\{z(n)\}_{n=-\infty}^0\subset \Gamma_+\cap K_\ell$ with $z(0)=z$. Lemma~\ref{lem:key:one} implies there is an invariant measure $\mu_\ell$ such that $\mu_\ell(K_\ell)=1$ and $r_i(\mu_\ell)\ge 0$ for all $i\in \{1,2,\dots,k\}$. Let $\mu$ be a weak* limit point of $\mu_\ell$ as $\ell\to\infty$. For the same reasoning as in the previous paragraph, $\mu$ is an invariant probability measure that satisfies $\mu(M)=1$ and $r_i(\mu)\ge 0$ for $i\in \{1,2,\dots,k\}$.
\end{proof}

Armed with these two lemmas,  the proof of Theorem~\ref{thm:main} is straightforward. 

\begin{proof}[Proof of Theorem~\ref{thm:main}.] Let $M_i$ be a Morse set. The first assertion of Lemma~\ref{lem:key:one} and the assumption that $\max_i r_i(\mu)>0$ for all invariant probability measures with $\mu(M_i)=1$ implies that $W^s(M_i)\subset \S_+$. Similarly, Lemma~\ref{lem:key:two} implies that $M_i$ is isolated with respect to $\S$. Applying Theorem~\ref{thm:hofbauer-so} implies that $\kappa$ is permanent. 
\end{proof}

To link the permanence of the kick-to-kick map to the permanence of the flow-kick dynamics, I conclude this section with a proof of lemma~\ref{lem:kick-to-kick-permanence-implies-flow-kick-permanence}. \vskip 0.1in

\begin{proof}[Proof of Lemma~\ref{lem:kick-to-kick-permanence-implies-flow-kick-permanence}.] Assumption \textbf{A5} implies that there exists $K>0$ is such that 
\[
\limsup_{n\to\infty}\|\kappa^n(z)\|\le K \mbox{ for all }z\in \S.
\]
Define 
\[
\widetilde K=\sup_{z\in \S, \|z\|\le 1.01 K} \{\|z.t\|:0\le t\le \tau(z)\} \times 1.01
\]
By compactness of $\{z\in \S: \|z\|\le 1.01 K\}$, the continuity of flow $(z,t)\to z.t$ and the continuity of $\tau$, one has $\widetilde K<\infty$. Given any $z\in \S$, choose $n^*$ (depending on $z$) such that 
\[
\|\kappa^n(z)\|\le 1.01 K \mbox{ for }n\ge n^*.
\]
By the definition of $\widetilde K$, 
\[
{\Phi}(t,z)\le \widetilde K \mbox{ for }t\ge \sum_{m=0}^{n^*}\tau(\kappa^m(z)). 
\]

Assume $M>0$ is such that 
\[
\liminf_{n\to\infty}\pi_i(\kappa^n(z))\ge M \mbox{ for all }z\in \S_+ \mbox{ and }i\in\{1,2,\dots,k\}.
\]
Define
\[
\rho=\inf \left\{ \int_0^t f_i(z.s)ds : z\in \S, \|z\|\le 1.01 K, i\in \{1,2,\dots,k\}, 0\le t\le \tau(z)\right\} -0.01
\]
By the compactness of $\{z\in \S:\|z\|\le 1.01K\}$, continuity of $\tau$ and the continuity of $f_i$, one has $\rho>-\infty$. Moreover, as $\int_0^0 f_i(z.s)ds=0$, one has $\rho\le -0.01$. Given any $z\in \S_+$, choose $n^*$ (depending on $z$) sufficiently large so that 
\[
\|\kappa^n(z)\|\le 1.01 K \mbox{ and }\min_i\pi_i(\kappa^n (z))\ge e^{-0.01} M \mbox{ for }n\ge n^*.
\]
Then, the definition of $\rho$ implies that 
\[
\min_i \pi_i({\Phi}(t,z)) \ge e^{\rho-0.01}  M \mbox{ for }t\ge \sum_{m=0}^{n^*}\tau(\kappa^m(z)).
\]
Choosing $\widetilde M =e^{\rho-0.01}  M$ completes the proof of the lemma. \end{proof}

\section{Appendix B: Proof of Theorem~\ref{thm:main2}}\label{proof:IG}

Throughout this proof,  assume that  \textbf{A1}--\textbf{A7} hold. For any $z=(x,y)$, define $\pi_i( z) =x_i$. I begin with a proof of Lemma~\ref{lemma:critical} that naturally follows from Lemma~\ref{lem:key:one}.

\begin{proof}[Proof of Lemma~\ref{lemma:critical}.] Assume $z\in \S_I$ for some $I\subset \{1,2,\dots,k\}$ and $\omega(z)\in \S_J$ for some $J\subset I$. Lemma~\ref{lem:key:one} implies there exists an invariant measure $\mu$ with compact support such that $\mu(\S_J)=1$ and $r_i(\mu)\le 0$ for all $i\in I$. Assumption \textbf{A6} implies that $r_i(\mu)<0$ for all $i\in I\setminus J$. 

Let $\mathcal{O}^-=\{z(n)\}_{n=-\infty}^0$ be a negative orbit for the kick-to-kick map i.e. $z(n+1)=\kappa(z(n))$ for all $n\le -1$.  Assume that $z(0)\in \S_I$ for some $I\subset\{1,2,\dots,k\}$ and $\alpha(\mathcal{O}^-)\subset \S_J$ for some $J\subset \{1,2,\dots,k\}$. Lemma~\ref{lem:key:one} implies there exists an invariant measure $\mu$ with compact support such that $\mu(\S_J)=1$ and $r_i(\mu)\ge 0$ for all $i\in I$. Assumption \textbf{A6} implies that $r_i(\mu)>0$ for all $i\in I\setminus J$. 
\end{proof}

The following lemma constructs a Morse decomposition of $\Gamma_0$ using the invasion graph. The proof closely follows the logic of \citet{hofbauer-schreiber-2022-permanence} but is included for the convenience of the reader.

\begin{lemma} \label{lemma:acyclic:build:morse:demposition}
Let $\ell \in \{ 0,1,\dots , k-1\}$ and $\C_\ell =\{I \in \C: \abs{I} \leq \ell\}$. Suppose $\IG$ is acyclic. Then, for each $I \in \C_\ell$ there is a nonempty compact invariant subset $M_I  \subset \S_I$ such that
\begin{itemize}
    \item[1.] $\omega(z)\subset \bigcup_{I \in \C_\ell} M_I$ for all $z\in \bigcup_{I \in \C_\ell} \S_I$.
    \item[2.] For each negative orbit $\mathcal{O}^- \subset \Gamma_0 \cap \bigcup_{I \in \C_\ell} \S_I$, $\alpha(\mathcal{O}^-)\subset \bigcup_{I \in \C_\ell} M_I$.
    \item[3.]  Each $M_I$ is isolated in $\S$.
     \item[4.] The family of invariant sets $\{ M_I: I \in \C_\ell \}$ is a Morse decomposition of $\bigcup_{I \in \C_\ell} \S_I \cap \Gamma_0$.
\end{itemize}
\end{lemma}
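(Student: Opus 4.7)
The plan is to proceed by strong induction on $\ell$, defining for each $I\in \C_\ell$ the set $M_I$ to be the canonical maximal compact invariant subset of $\S_I\cap \Gamma$. The three tools in play are (i) Lemma~\ref{lemma:critical}, which ties limit sets to edges of $\IG$; (ii) Lemma~\ref{lem:key:two} combined with \textbf{A6} and \textbf{A7}, which supplies isolation; and (iii) the acyclicity of $\IG$, which delivers a topological ordering of the Morse sets. Nonemptiness of each $M_I$ is immediate: since $I\in \C$, some ergodic $\mu$ has $I(\mu)=I$, and $\supp\mu$ is a compact invariant subset of $\S_I\cap \Gamma$. For the base case $\ell=0$, only $\emptyset$ (if it lies in $\C$) contributes, and $M_\emptyset=\Gamma\cap \S_\emptyset$ trivially meets all four properties.

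For the inductive step, assume the family $\{M_J:J\in \C_{\ell-1}\}$ has been built. To verify property (1), take $z\in \S_I$ with $I\in \C_\ell$. If $\omega(z)\subset \S_I$, then $\omega(z)$ is a compact invariant subset of $\S_I\cap \Gamma$ and thus lies inside $M_I$ by maximality. Otherwise $\omega(z)\cap \S_J\neq\emptyset$ for some $J\subsetneq I$; picking $\tilde z$ in this intersection, Lemma~\ref{lemma:critical} gives $J\in \C$ and an edge $I\to J$ in $\IG$. The forward orbit of $\tilde z$ stays in $\S_J$, so I iterate with $|J|<|I|$; after finitely many steps the process reaches some $J^\star\in \C_{\ell-1}$ and the inductive hypothesis yields $\omega(\tilde z)\subset M_{J^\star}$. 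A compactness/closedness argument then forces every point of $\omega(z)$ into $\bigcup_{J\in \C_\ell}M_J$. Property (2) is obtained by the mirror descent using the second half of Lemma~\ref{lemma:critical}, which produces edges $J\to I$ in $\IG$ for $\alpha$-limit pieces.

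For isolation (3), suppose $M_I$ were not isolated in $\S$. Lemma~\ref{lem:key:two} then produces invariant probability measures $\mu_+,\mu_-$ with $\mu_\pm(M_I)=1$ and $r_i(\mu_+)\ge 0$, $r_i(\mu_-)\le 0$ for every $i$. Because $M_I\subset \S_I$, every ergodic component of $\mu_\pm$ has species support exactly $I$; assumption \textbf{A6} rules out $r_i(\mu)=0$ for $i\notin I$, while \textbf{A7} forces $\sgn r_i(\mu_+)=\sgn r_i(\mu_-)$ for such $i$. Since $\ell\le k-1$ guarantees some $i\notin I$, the signs on $\mu_+$ and $\mu_-$ must match and be nonzero, contradicting $r_i(\mu_+)\ge 0\ge r_i(\mu_-)$. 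For the Morse order in (4), I topologically sort the communities in $\C_\ell$ so that every edge of $\IG$ runs from a lower index to a higher one. For a transient $z\in \S_I$, the $\omega$-limit argument above yields an edge $I\to J$ with $\omega(z)\subset M_J$, and the $\alpha$-limit argument yields an edge $K\to I$ with $\alpha(\mathcal{O}^-)\subset M_K$; thus $K$ precedes $I$ which precedes $J$ in the sort. Pairwise disjointness of the $M_I$ is automatic since distinct $\S_I$ are disjoint.

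The main obstacle I anticipate is the canonical construction and compactness of $M_I$: the set $\S_I$ is locally closed but not closed in $\S$, so the union $N_I$ of all compact invariant subsets of $\S_I\cap \Gamma$ need not itself be closed. I expect to handle this by defining $M_I=\overline{N_I}\setminus \bigcup_{J\subsetneq I,\,J\in \C_{\ell-1}}M_J$ and exploiting the inductive isolation of the sub-community Morse sets to guarantee that the resulting set is simultaneously compact, contained in $\S_I$, and $\kappa$-invariant. Verifying that this identity actually captures the dynamically relevant maximal invariant set, and that its boundary in $\overline{\S_I}$ is absorbed exactly into the $M_J$ with $J\subsetneq I$, is the most delicate step; it is here that acyclicity of $\IG$ enters crucially, preventing orbits from returning from a smaller $M_J$ back to the face $\S_I$ and thereby disrupting compactness.
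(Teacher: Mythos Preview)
Your plan is essentially the paper's: induction on $\ell$, $M_I$ taken as the maximal compact invariant set in $\S_I$, isolation via Lemma~\ref{lem:key:two} together with \textbf{A6}--\textbf{A7}, and the Morse order coming from a topological sort of the acyclic $\IG$. Two remarks are worth making.

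First, your ``main obstacle'' is exactly where the paper's argument is cleaner. Rather than building $M_I$ set-theoretically as $\overline{N_I}\setminus\bigcup_{J}M_J$ and then arguing about its invariance and compactness, the paper observes that by the inductive hypothesis the boundary $\partial(\S_I\cap\Gamma)$ already carries a Morse decomposition by isolated sets $M_J$ (for $J\subsetneq I$, $J\in\C_{\ell-1}$); hence $\partial(\S_I\cap\Gamma)$ is itself isolated in $\S$, and the existence of a maximal compact invariant set $M_I\subset\S_I$ follows directly. This bypasses the delicate closure-subtraction you describe.

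Second, your verification of property~(1) slightly misapplies Lemma~\ref{lemma:critical}: that lemma requires $\omega(z)\subset\S_J$, not merely $\omega(z)\cap\S_J\neq\emptyset$, so it does not immediately produce the edge $I\to J$ you claim. The descent you sketch (pass to $\tilde z$, iterate) is salvageable, but the ``compactness/closedness argument'' that sweeps all of $\omega(z)$ into $\bigcup_J M_J$ needs more than you have written. The paper sidesteps this by deriving properties~(1)--(2) from property~(4) rather than verifying them separately.
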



\begin{proof}[Proof of Lemma~\ref{lemma:acyclic:build:morse:demposition}.] We prove this lemma by induction on $\ell$. If $\ell = 0$, then $M_{\emptyset} = \{z=(0,y):z\in \Gamma_0\}$ is hyperbolic by Assumptions \textbf{A6}-\textbf{A7}, and, consequently,  isolated in $\S_0$. Properties 1, 2, and 4, hold immediately. 

Suppose that the lemma holds for $\ell-1$, and all the $M_I$ for $\abs{I} < \ell$ are given.  Let $I\in \C_\ell$. By induction (property 4), the family of $M_J$ such that $J\subsetneq I$ and $J\in \C_{\ell-1}$ form a Morse decomposition of the boundary $\partial (\S_I\cap \Gamma)$ of $\S_I\cap \Gamma$. By induction (property 3), each $M_J$ is isolated in $\S$. Hence, $\partial (\S_I\cap \Gamma)$ is isolated in $\S$ and there exists a maximal compact invariant set $M_I$ in $\S_I$. 

To show property 3., suppose to the contrary that $M_I$  is not isolated in $\S$. Lemma~\ref{lem:key:two} implies there exist invariant measures $\mu_+,\mu_-$ such that $\mu_+(M_I)=\mu_-(M_I)=1$ and $r_j( \mu_+) \ge 0$ and $r_j( \mu_-) \le 0$ for all $j\in \{1,2,\dots,k\}$. Assumption \textbf{A7} implies that $r_j( \mu_+) > 0$ and $r_j( \mu_-) < 0$ for $j\notin I$. This contradicts assumption \textbf{A6}. Thus, property 3. holds. 

Finally, the assumption that $\IG$ is acyclic implies property 4 by choosing a suitable order on $\S_\ell$.
\end{proof}

Taking $\ell =k-1$ in Lemma~\ref{lemma:acyclic:build:morse:demposition}, one gets that the family of invariant sets $\{ M_I: I \in \S \}$ is a Morse decomposition of $\Gamma_0$. Applying Theorem~\ref{thm:main} completes the proof.

\end{document}